\newtheorem{theorem}{Theorem}
\newtheorem*{theorem*}{Theorem}
\newtheorem{lemma}{Lemma}
\newtheorem*{lemma*}{Lemma}
\newtheorem{definition}{Definition}
\def\dim{{\mathrm{dim}}}
\def\phys{{\mathrm{phys}}}
\def\sym{{\mathrm{sym}}}
\def\Tr{{\mathrm{Tr}}}
\def\tr{{\mathrm{tr}}}
\newcommand{\ch}{\mathcal H}
\newcommand{\cu}{\mathcal U}
\newcommand{\cg}{\mathcal G}
\newcommand{\ket}[1]{|#1\rangle}
\newcommand{\bra}[1]{\langle #1|}
\newcommand{\bracket}[2]{\langle #1|#2\rangle}
\newcommand{\ketbra}[2]{|#1\rangle\langle #2|}
\begin{document}
\title{Entanglement-asymmetry correspondence for internal quantum reference frames}

\author{Anne-Catherine de la Hamette}
\email{AnneCatherine.delaHamette@univie.ac.at}
\affiliation{Institute for Quantum Optics and Quantum Information,
Austrian Academy of Sciences, Boltzmanngasse 3, A-1090 Vienna, Austria}
\affiliation{Vienna Center for Quantum Science and Technology (VCQ), Faculty of Physics, University of Vienna, Vienna, Austria}
\author{Stefan L.\ Ludescher}
\email{Stefan.Ludescher@oeaw.ac.at}
\affiliation{Institute for Quantum Optics and Quantum Information,
Austrian Academy of Sciences, Boltzmanngasse 3, A-1090 Vienna, Austria}
\affiliation{Vienna Center for Quantum Science and Technology (VCQ), Faculty of Physics, University of Vienna, Vienna, Austria}
\author{Markus P.\ M\"uller}
\email{Markus.Mueller@oeaw.ac.at}
\affiliation{Institute for Quantum Optics and Quantum Information,
Austrian Academy of Sciences, Boltzmanngasse 3, A-1090 Vienna, Austria}
\affiliation{Vienna Center for Quantum Science and Technology (VCQ), Faculty of Physics, University of Vienna, Vienna, Austria}
\affiliation{Perimeter Institute for Theoretical Physics, 31 Caroline Street North, Waterloo, Ontario N2L 2Y5, Canada}

\date{December 22, 2022}

\begin{abstract}
In the quantization of gauge theories and quantum gravity, it is crucial to treat reference frames such as rods or clocks not as idealized external classical relata, but as internal quantum subsystems. In the Page-Wootters formalism, for example, evolution of a quantum system $S$ is described by a stationary joint state of $S$ and a quantum clock, where time-dependence of $S$ arises from conditioning on the value of the clock. Here, we consider (possibly imperfect) internal quantum reference frames $R$ for arbitrary compact symmetry groups, and show that there is an exact quantitative correspondence between the amount of entanglement in the invariant state on $RS$ and the amount of asymmetry in the corresponding conditional state on $S$. Surprisingly, this duality holds exactly regardless of the choice of coherent state system used to condition on the reference frame. Averaging asymmetry over all conditional states, we obtain a simple representation-theoretic expression that admits the study of the quality of imperfect quantum reference frames, quantum speed limits for imperfect clocks, and typicality of asymmetry in a unified way. Our results shed light on the role of entanglement for establishing asymmetry in a fully symmetric quantum world.
\end{abstract}

\maketitle

\emph{Introduction.} In a quantum world with fundamental symmetries, all physical quantities must ultimately be understood as relative to some frame of reference which is itself a quantum system. This simple insight has long been regarded as relevant~\cite{AharonovSusskind,DeWitt,AharonovKaufherr,Wigner,ArakiYanase} for the quantization of gravity and other gauge theories~\cite{RovelliQG,Thiemann,HenneauxTeitelboim,RovelliRef,RovelliObservable,Isham,Dittrich,Marolf}, for the study of asymmetry in quantum information theory~\cite{BRWS,Marvian,MarvianSpekkens,GourSpekkens,GourMarvianSpekkens,PalmerGirelliBartlett,LoveridgeMiyaderaBusch,Loveridge,SmithPianiMann,Smith,Angelo}, and for quantum thermodynamics~\cite{LostaglioJenningsRudolph,LostaglioKorzekwa,Cwiklinski,Erker,GourJenningsBuscemi}. Recently, it has led to a surge of interest in the quantum foundations community on the behavior of quantum systems under transformations between such ``internal'' quantum reference frames~\cite{Giacomini,HametteGalley,GiacominiSpacetime,Streiter,CastroRuiz,Mikusch,GiacominiSpin,Ballesteros,KrummHoehnMueller,HoehnKrummMueller,Hardy18,ZychBrukner,Hardy19,GBEEP,GB2109,HoehnSubsystem,CastroRuizOreshkov,DGHLM21,HoehnSubsystem,CastroRuizOreshkov,DGHLM21,Vanrietvelde,Vanrietvelde2,HoehnUniverse,Vanrietvelde3,Trinity,HSLRel,DGHLM21,HoehnKrummMueller}. Among other results, this has led to proposals for quantum formulations of Einstein's equivalence principle~\cite{Hardy18,ZychBrukner,Hardy19,GBEEP,GB2109} and to insights into the relativity of the notion of subsystem~\cite{HoehnSubsystem,CastroRuizOreshkov,DGHLM21}.

It has been shown that several of these approaches can be unified and generalized in a ``perspective-neutral'' framework~\cite{HoehnSubsystem,CastroRuizOreshkov,DGHLM21,Vanrietvelde,Vanrietvelde2,HoehnUniverse,Vanrietvelde3,Trinity,HSLRel,DGHLM21,HoehnKrummMueller} for which the Page-Wootters mechanism~\cite{PageWootters83,Wootters84,Giovannetti,Trinity,HSLRel} (PWM) is a well-known special case, see Fig.~\ref{fig_overlap}: the global system is in a state $|\psi\rangle_{RS}$ that is invariant under symmetry transformations, but conditioning on a subsystem $R$ defines a state $|\psi\rangle_{S|R}$ of $S$ relative to $R$ that is asymmetric. For the PWM in particular, this has led to the slogan of ``time replaced by quantum correlations'', but despite partial results~\cite{Mendes,Martinelli,Carmo}, a quantitative relation between entanglement and conditional time asymmetry has so far not been established.

Here we provide a rigorous formulation of such a relation, and do so in the general case of arbitrary compact Lie symmetry groups $\mathcal{G}$ (not just time translations like in PWM), i.e.\ for the general ``perspective-neutral'' approach that overlaps with constraint quantization. We show an exact correspondence between the amount of entanglement in the global state of $RS$ and the amount of asymmetry in the conditional state of $S$, and demonstrate several resulting insights into the structure and physical properties of such quantum reference frames.

\begin{figure}[H]
\centering \includegraphics[width=.85\columnwidth]{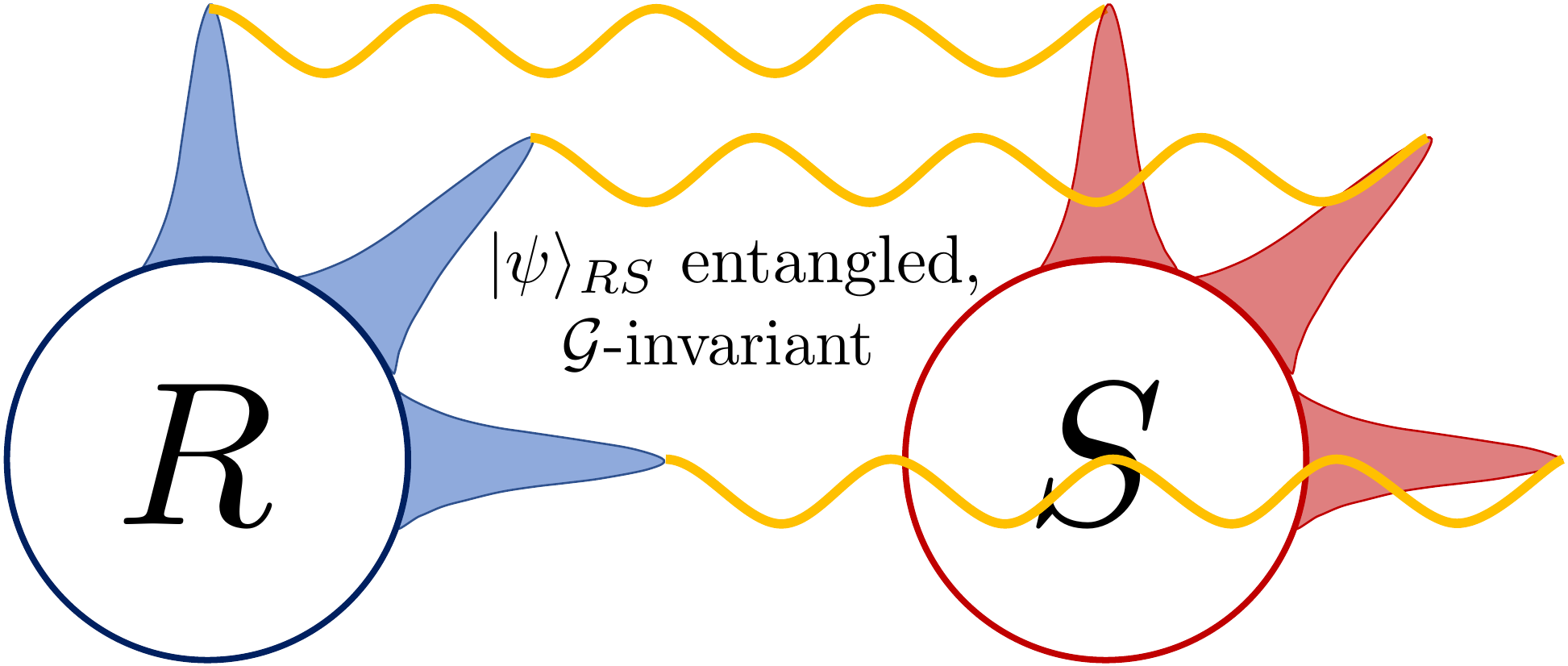}
\includegraphics[width=.85\columnwidth]{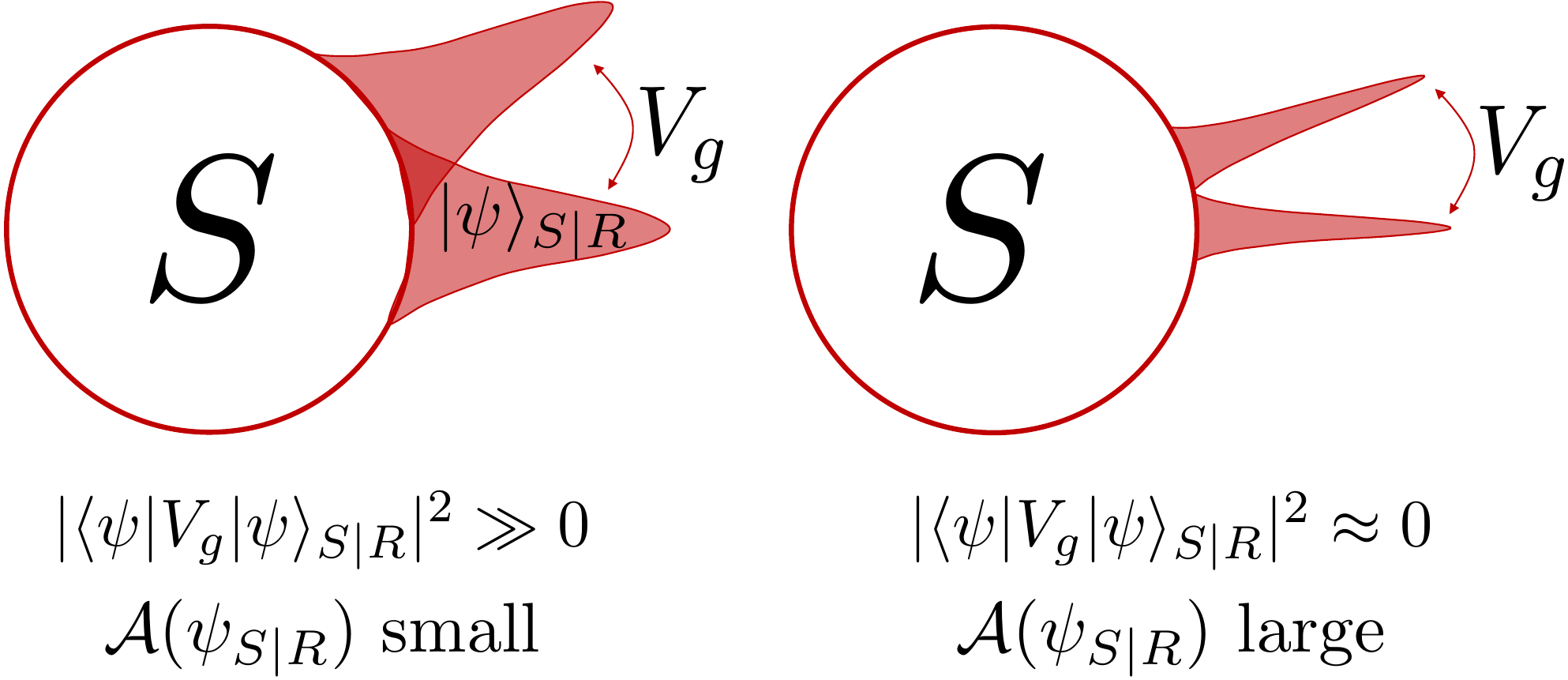}
\caption{\textbf{Upper pane:} a globally invariant (e.g.\ timeless) state $|\psi\rangle_{RS}$ induces asymmetry in a subsystem $S$ by conditioning on the reference frame (e.g.\ clock) $R$. \textbf{Lower pane:} more induced asymmetry amounts to smaller overlap of the conditional state with its translations, i.e.\ a larger value of $\mathcal{A}(\psi_{S|R})$. We prove that the R\'enyi-$2$ entanglement entropy of $\psi_{RS}$ equals the asymmetry of the conditional state $\psi_{S|R}$.}
\label{fig_overlap}
\end{figure}

\emph{Framework.} We consider two quantum systems $R$ (the reference) and $S$ (the system) with $\dim\,R<\infty$, carrying unitary representations $U$ and $V$ of the compact (possibly finite) symmetry group $\mathcal{G}$. That is, $g\in\mathcal{G}$ acts via $U_g\otimes V_g$ on $RS$, and we do not assume that $U$ or $V$ are irreducible. We study states $|\psi\rangle_{RS}$ that are globally invariant, $|\psi\rangle_{RS}=U_g\otimes V_g|\psi\rangle_{RS}$ for all $g$, reflecting that, in a background-independent theory, all physically meaningful properties are purely relational~\cite{Vanrietvelde,Vanrietvelde2,HoehnUniverse,Vanrietvelde3,Trinity,HSLRel,DGHLM21}. Demanding that pure states do not pick up a global phase under the action of $\cg$ can be motivated by preserving entanglement with a purifying system~\cite{HoehnKrummMueller}. Using terminology from constraint quantization, the subspace of all globally invariant states will be called the ``physical Hilbert space'' $\ch_{\rm phys}$~\cite{HenneauxTeitelboim,Vanrietvelde}.

In the PWM, $\mathcal{G}$ is the group of time translations $\mathbb{R}$, represented on $S$ via $V_t=\exp(-it\hat H_S/\hbar)$, and on the (infinite-dimensional) reference $R$ via $U_t |s\rangle_R=|s+t\rangle_R$, where $|s\rangle_R$ denotes an improper clock eigenstate. Formally, the physical Hilbert space consists of the globally ``timeless'' states $|\psi\rangle_{RS}=\int_{\mathbb{R}}dt\,|t\rangle_R\otimes|\psi(t)\rangle_S$.

Now, we are interested in the quantum state of $S$ conditional on $R$ being ``oriented in some direction'' $g\in\mathcal{G}$. To make precise sense of this intuition, we define a coherent state system~\cite{Perelomov,DGHLM21} $\{|g\rangle_R\}_{g\in\mathcal{G}}$ by choosing some normalized state $|e\rangle_R$, where $e$ is the unit element of $\mathcal{G}$, and setting $|g\rangle_R:=U_g |e\rangle_R$. It follows that $U_g |g'\rangle_R = |gg'\rangle_R$. We demand that $|e\rangle_R$ is chosen such that we obtain a resolution of the identity, $\int_{\mathcal{G}}dg\,|g\rangle\langle g|=c\cdot\mathbf{1}_R$ with $c>0$ some constant, where $dg$ denotes the Haar measure on $\mathcal{G}$ (it follows that $c=1/d_R$, with $d_R=\dim\, R$). If $g\mapsto U_g$ is an irreducible representation (irrep), then this follows automatically from Schur's lemma; otherwise it imposes some conditions on $|e\rangle_R$ described in~\cite{DGHLM21}.

The coherent state system defines a covariant positive-operator valued measure (POVM)~\cite{Holevo,Busch,BRWS} that we can use to measure the ``orientation'' $g$ of $R$. If we do this on a state $|\psi\rangle_{RS}\in\mathcal{H}_{\rm phys}$ and find outcome $g$, then the post-measurement state (see Supplemental Material~\ref{SecSomeLemmas}) of $S$ will be
\begin{equation}
   |\psi(g)\rangle_{S|R}:=\sqrt{d_R}\langle g|_R\otimes\mathbf{1}_S |\psi\rangle_{RS}.
   \label{eqConditional}
\end{equation}
We will abbreviate $|\psi\rangle_{S|R}:=|\psi(e)\rangle_{S|R}$, and we will sometimes emphasize the dependence of this state on the choice of $|e\rangle_R$ by writing $|\psi\rangle_{S|R}^{|e\rangle}$. We have~\cite{DGHLM21}
\[
   |\psi(g)\rangle_{S|R}=\sqrt{d_R} \langle e|_R U_g^\dagger\otimes V_g V_g^\dagger |\psi\rangle_{RS}=V_g |\psi(e)\rangle_{S|R}.
\]
This is analogous to the PWM, where conditioning on the state $|t\rangle_R$, i.e.\ on the clock showing time $t$, gives us the time-evolved state $|\psi(t)\rangle_S=|\psi(t)\rangle_{S|R}=V_t |\psi(0)\rangle_{S|R}$.

\emph{Conditional asymmetry.} While the initial physical state $|\psi\rangle_{RS}$ is fully symmetric, we would like the measurement of $R$ to break the symmetry and to lead to an \emph{asymmetric} conditional state, i.e.\ $|\psi\rangle_{S|R} \neq V_g |\psi\rangle_{S|R}$ for $g\neq e$. Intuitively, for a ``good'' reference frame $R$, it should be possible to locate the system $S$ very precisely relative to $R$. For example, if $\mathcal{G}$ is a finite group and $S$ carries an orthonormal coherent state system $\{|g\rangle_S\}_{g\in\mathcal{G}}$, this could mean that $|\psi\rangle_{S|R}$ is very strongly peaked on a single $h\in\cg$, i.e.\ $|\langle h|\psi\rangle_{S|R}|\approx 1$. Then, translating by $g\neq e$ will lead to a state $|\psi(g)\rangle_{S|R}$ almost orthogonal to $|\psi\rangle_{S|R}$, because $0\approx |\langle g^{-1}h|\psi\rangle_{S|R}|=|\langle h|\psi(g)\rangle_{S|R}|$.

In other words, a ``good'' reference frame $R$ should lead to a conditional state $|\psi\rangle_{S|R}=|\psi(e)\rangle_{S|R}$ that is well-distinguishable from its ``translations'' $V_g |\psi\rangle_{S|R}=|\psi(g)\rangle_{S|R}$, see Fig.~\ref{fig_overlap}. A well-known operational quantifier of distinguishability of quantum states is the \emph{fidelity}~\cite{NielsenChuang}, $\mathcal{F}(\rho,\sigma):=({\rm tr}\sqrt{\sqrt{\rho}\sigma\sqrt{\rho}})^2$. We have $0\leq\mathcal{F}(\rho,\sigma)\leq 1$, where $0$ is attained if and only if $\rho$ and $\sigma$ are perfectly distinguishable, and $1$ if and only if $\rho=\sigma$. For pure states, it reduces to $\mathcal{F}(|\psi\rangle,|\varphi\rangle)=|\langle\psi|\varphi\rangle|^2$. This motivates the following definition.
\begin{definition}
Given any physical state $|\psi\rangle_{RS}\in\mathcal{H}_{\rm phys}$, the \emph{conditional uniformity} of the corresponding conditional state $|\psi\rangle_{S|R}\equiv |\psi\rangle_{S|R}^{|e\rangle}$ (as defined in Eq.~(\ref{eqConditional})) is
\begin{equation}
   \mathcal{U}(\psi_{S|R}):=
   \int_{\mathcal{G}}dg\, \mathcal{F}\left(\strut|\psi\rangle_{S|R},V_g|\psi\rangle_{S|R}\right),
   \label{EqDef1}
\end{equation}
and we define its \emph{conditional asymmetry} as $\mathcal{A}(\psi_{S|R}):=-\log \mathcal{U}(\psi_{S|R})$.
\end{definition}
Intuitively, for ``bad'' quantum reference frames $R$, $|\psi\rangle_{S|R}\approx V_g|\psi\rangle_{S|R}=|\psi(g)\rangle_{S|R}$ for many $g$, and $\mathcal{U}(\psi_{S|R})$ will be close to unity; and for ``good'' ones, this quantity will be close to zero. By invariance of the Haar measure, the conditional uniformity is the same for all $|\psi(g)\rangle_{S|R}$ and can also be written
\begin{equation}
   \mathcal{U}(\psi(g)_{S|R})
   =\int_{\mathcal{G}}dg'\,\int_{\mathcal{G}}dg''\, |\langle\psi(g')|\psi(g'')\rangle_{S|R}|^2.
   \label{eqPower}
\end{equation}
A priori, conditional uniformity will depend on the choice of coherent state system $\{|g\rangle_R\}_{g\in\mathcal{G}}$ since $|\psi\rangle_{S|R}=|\psi\rangle_{S|R}^{|e\rangle}$ does. Intuitively, the choice of covariant POVM that is used to measure the reference frame should have some impact on the quality of its use. Surprisingly, however, this intuition does not hold up in our context. Using the notation $\mathcal{H}_\alpha(\rho):=\frac 1 {1-\alpha}\log{\rm tr}(\rho^\alpha)$ for the R\'enyi-$\alpha$ entropy of a quantum state $\rho$, we get:
\begin{theorem}
\label{TheA}
The conditional asymmetry of $\psi_{S|R}$ equals the R\'enyi-2 entanglement entropy of $\psi_{RS}$:
\[
   \mathcal{A}(\psi_{S|R})=\mathcal{H}_2({\rm Tr}_R |\psi\rangle\langle\psi|_{RS}).
\]
In particular, $\mathcal{A}(\psi_{S|R})\equiv \mathcal{A}(\psi_{S|R}^{|e\rangle})$ is \emph{independent} of the choice of coherent state system, i.e.\ of $|e\rangle$, and can be understood as a function of the physical state $|\psi\rangle_{RS}$.
\end{theorem}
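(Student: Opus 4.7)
The plan is to reduce the claim $\mathcal{A}(\psi_{S|R})=\mathcal{H}_2(\rho_S)$, with $\rho_S:=\mathrm{Tr}_R|\psi\rangle\langle\psi|_{RS}$, to the purity identity $\mathcal{U}(\psi_{S|R})=\mathrm{tr}(\rho_S^2)$ by taking $-\log$ of both sides. Using $\mathcal{F}(|\varphi\rangle,|\chi\rangle)=|\langle\varphi|\chi\rangle|^2$ for pure states and the relation $V_g|\psi\rangle_{S|R}=|\psi(g)\rangle_{S|R}$ recalled in the excerpt, this amounts to proving
\[
\int_\mathcal{G}dg\,|\langle\psi(e)|\psi(g)\rangle_{S|R}|^2=\mathrm{tr}(\rho_S^2).
\]

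The first step is to produce a ``dual'' expression for $\rho_S$ in terms of the conditional states. From $|\psi(g)\rangle_{S|R}=\sqrt{d_R}\,\langle g|_R\otimes\mathbf{1}_S|\psi\rangle_{RS}$ one reads off $|\psi(g)\rangle\langle\psi(g)|_S=d_R\,\langle g|_R|\psi\rangle\langle\psi|_{RS}|g\rangle_R$, so integrating against the Haar measure and using the coherent-state resolution $\int dg\,|g\rangle\langle g|_R=\mathbf{1}_R/d_R$ yields
\[
\int_\mathcal{G}dg\,|\psi(g)\rangle\langle\psi(g)|_S=\rho_S.
\]
Plugging this into the integral above and pulling out the bra and ket of $|\psi(e)\rangle$ gives the compact formula $\mathcal{U}(\psi_{S|R})=\langle\psi(e)|\rho_S|\psi(e)\rangle_{S|R}$.

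The second step is $V$-invariance of $\rho_S$: conjugating $|\psi\rangle\langle\psi|_{RS}$ by $U_g\otimes V_g$ leaves it invariant, and taking $\mathrm{Tr}_R$ (with cyclicity absorbing the $U_g$'s on the reference factor) yields $V_g\rho_S V_g^\dagger=\rho_S$. Combining the previous display with $|\psi(g)\rangle=V_g|\psi(e)\rangle$ also gives the twirled form $\rho_S=\int dg\,V_g|\psi(e)\rangle\langle\psi(e)|V_g^\dagger$. Substituting this into $\mathrm{tr}(\rho_S^2)$ and using $V_g^\dagger\rho_S V_g=\rho_S$ under the trace,
\[
\mathrm{tr}(\rho_S^2)=\int dg\,\langle\psi(e)|V_g^\dagger\rho_S V_g|\psi(e)\rangle=\langle\psi(e)|\rho_S|\psi(e)\rangle,
\]
using the normalization of the Haar measure in the last step. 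This matches the formula for $\mathcal{U}$ obtained above, finishing the proof.

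I do not expect a real obstacle here. The only slightly subtle point is recognizing that $\rho_S$ admits two dual representations, as the partial trace of the physical state and as the $V$-twirl of the conditional state, and that the normalization $c=1/d_R$ of the coherent-state resolution is precisely what makes these two pictures coincide numerically. The independence of the result from the fiducial vector $|e\rangle$ then comes for free, since the right-hand side $\mathcal{H}_2(\rho_S)$ makes no reference to the coherent-state system at all.
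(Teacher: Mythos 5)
Your proof is correct and follows essentially the same route as the paper's: both reduce to the identity $\mathcal{U}(\psi_{S|R})=\langle\psi(e)|\rho_S|\psi(e)\rangle_{S|R}$ via the coherent-state resolution of the identity (equivalently, $\rho_S=\int_{\mathcal G}dg\,|\psi(g)\rangle\langle\psi(g)|_{S|R}$), and then use the $V_g$-invariance of $\rho_S$ together with a Haar average to identify this with $\mathrm{tr}(\rho_S^2)$. Your reorganization --- substituting the twirled form of $\rho_S$ into $\mathrm{tr}(\rho_S^2)$ rather than averaging the expression for $\mathcal{U}$ over the group --- is only a cosmetic difference.
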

\emph{Proof.} Expanding the definition of $\mathcal{U}$, we find
\begin{widetext}
\begin{eqnarray*}
\mathcal{U}(\psi_{S|R})&=&d_R^2\int_{\mathcal{G}}dg\,|\langle\psi|_{RS}(|e\rangle_S\otimes\mathbf{1}_S)(\langle g|_R\otimes\mathbf{1}_S)|\psi\rangle_{RS}|^2\\
&=& d_R^2 \langle\psi|_{RS}(|e\rangle_R\otimes\mathbf{1}_S)\int_{\mathcal{G}}dg\, (\langle g|_R\otimes\mathbf{1}_S)|\psi\rangle\langle\psi|_{RS}(|g\rangle_R\otimes\mathbf{1}_S)(\langle e|_R\otimes\mathbf{1}_S)|\psi\rangle_{RS}\\
&=& d_R \langle \psi|_{RS}\left(|e\rangle\langle e|_R\otimes {\rm Tr}_R |\psi\rangle\langle\psi|_{RS}\right)|\psi\rangle_{RS},
\end{eqnarray*}
\end{widetext}
where we have used that the $|g\rangle_R$ yield a resolution of the identity. To simplify this further, replace both occurrences of $|\psi\rangle_{RS}$ by $U_g^\dagger\otimes V_g^\dagger|\psi\rangle_{RS}$, use that $V_g {\rm Tr}_R |\psi\rangle\langle\psi|_{RS} V_g^\dagger={\rm Tr}_R |\psi\rangle\langle\psi|_{RS}$, and compute the group average over $g$ of the resulting expression. This yields $\mathcal{U}(\psi_{S|R})=\langle\psi|_{RS}(\mathbf{1}_R\otimes{\rm Tr}_R|\psi\rangle\langle\psi|_{RS})|\psi\rangle_{RS}={\rm tr}[({\rm Tr}_R|\psi\rangle\langle\psi|_{RS})^2]$ which is independent of $|e\rangle_R$. Finally, take the negative logarithm of both sides.
\qed

In the special case where $\mathcal{G}$ is a finite subgroup of time translations, a version of this result has been given in~\cite{Boette}.

\emph{Resource-theoretic consequences.} As we have just seen, for conditional asymmetry, any choice of coherent state system $\{|g\rangle_R\}_{g\in\mathcal{G}}$ is as good as any other, $\{|g\rangle'_R\}_{g\in\mathcal{G}}$: we have $\mathcal{A}(\psi_{S|R}^{|e\rangle})=\mathcal{A}(\psi_{S|R}^{|e\rangle'})$ for all $|\psi\rangle_{RS}\in\mathcal{H}_{\rm phys}$. However, $\mathcal{A}$ is just \emph{one possible} measure of asymmetry. Independence from $|e\rangle_R$ does not hold for all possible measures of asymmetry. For example, taking the $4$th power instead of the $2$nd in Eq.~(\ref{eqPower}) defines an alternative measure of uniformity that \emph{does} depend on the choice of $|e\rangle$, see Supplemental Material~\ref{app:altmeasures} for an example.

Thus, a more systematic and operational approach is warranted. Such an approach is to study asymmetry in the context of a \emph{resource theory}~\cite{ChitambarGour}. Resource theories provide useful tools to describe and quantify the role that a resource plays in the performance of certain tasks, be it in thermodynamics~\cite{Brandao} or entanglement theory~\cite{Vedral}. Here, we use the resource theory of asymmetry~\cite{Marvian,MarvianSpekkens} to study the quality of a quantum reference frame.

To this end, let us introduce several relevant notions. Consider any representation $g\mapsto U_g$ of a compact group $\mathcal{G}$. A quantum state $\rho$ is \emph{$\mathcal{G}$-invariant} if $U_g\rho U_g^\dagger = \rho$ for all $g\in\mathcal{G}$. We say that a quantum operation $\mathcal{E}$ on the operators of the corresponding Hilbert space is \emph{$\mathcal{G}$-covariant} if $\mathcal{E}(U_g \rho U_g^\dagger)=U_g \mathcal{E}(\rho) U_g^\dagger$ for all $\rho$ and all $g\in\mathcal{G}$. Crucially, $\mathcal{G}$-covariant operations map $\mathcal{G}$-invariant states to $\mathcal{G}$-invariant states --- in this sense, they cannot create $\mathcal{G}$-asymmetry. We say that $\rho$ is \emph{at least as asymmetric as $\rho'$} if there is a $\mathcal{G}$-covariant operation $\mathcal{E}$ with $\mathcal{E}(\rho)=\rho'$. If also $\rho'$ is at least as asymmetric as $\rho$, we say that $\rho$ and $\rho'$ are \emph{equally $\mathcal{G}$-asymmetric}. Note that there are also pairs of states with the property that neither one is at least as asymmetric as the other. In this case, we say that $\rho$ and $\rho'$ are \emph{incomparably $\mathcal{G}$-asymmetric}.

Our result $\mathcal{A}(\psi_{S|R}^{|e\rangle})=\mathcal{A}(\psi_{S|R}^{|e\rangle'})$ does not automatically imply that $\psi_{S|R}^{|e\rangle}$ and $\psi_{S|R}^{|e\rangle'}$ are equally $\mathcal{G}$-asymmetric, but we can show the following:
\begin{theorem}
\label{TheResource}
Consider two choices of coherent state system, $\{|g\rangle_R\}_{g\in\mathcal{G}}$ and $\{|g\rangle'_R\}_{g\in\mathcal{G}}$. Then, for every physical state $|\psi\rangle_{RS}\in\mathcal{H}_{\rm phys}$, the conditional states $|\psi\rangle_{S|R}^{|e\rangle}$ and $|\psi\rangle_{S|R}^{|e\rangle'}$ are either equally or incomparably $\mathcal{G}$-asymmetric. That is, in the resource-theoretic sense, no coherent state system induces more asymmetry on $S$ than any other.
\end{theorem}
The proof is given in Supplemental Material~\ref{app:proof2}. It employs techniques of \cite{Marvian,MarvianSpekkens} which use \emph{characteristic functions} $\chi_\varphi(g):=\langle\varphi|V_g|\varphi\rangle$ to characterize pure-state interconvertibility under $\mathcal{G}$-covariant operations. They are related to conditional uniformity via $\mathcal{U}(\psi_{S|R})=\int_{\mathcal{G}}dg\,|\chi_{\psi_{S|R}}(g)|^2$.

\emph{Physical Hilbert space average.} To quantify how much asymmetry the quantum reference frame $R$ is able to induce on $S$, we have to go beyond single conditional states and consider the collection of all $|\psi\rangle_{S|R}$. Theorem~\ref{TheA} allows us to do so in a particularly elegant way: conditional uniformity $\mathcal{U}(\psi_{S|R}^{|e\rangle})$ is independent of the choice of seed coherent state $|e\rangle$ and can be understood as a function of $|\psi\rangle_{RS}$. We can thus determine an average of this quantity over all conditional states by computing the Hilbert space average of $\mathcal{U}$ over $\mathcal{H}_{\rm phys}$. Not only can this be done analytically, but the result will be independent of the coherent state system and quantify the quality of the reference frame in terms of simple properties of the representations $g\mapsto U_g$ on $R$ and $g\mapsto V_g$ on $S$.

To this end, let us fix some notation. Following~\cite{Simon}, the set of (unitarily inequivalent) irreps of $\mathcal{G}$ is denoted $\mathcal{\hat G}$. Decomposing the representation on $R$ into irreps, we get $U_g=\bigoplus_{\alpha\in\mathcal{\hat G}}n^U_\alpha T_g^{(\alpha)}$, where $n_\alpha^U\in \mathbb{N}_0$ is the multiplicity of the irrep $T^{(\alpha)}$, $g\mapsto T_g^{(\alpha)}$. Similarly, $V_g=\bigoplus_{\beta\in\mathcal{\hat G}}n_\beta^V T_g^{(\beta)}$. The dimension of the irrep $\alpha$ will be denoted $d_\alpha$, and the conjugate representation of $\alpha$ will be denoted $\bar\alpha$, i.e.\ $T_g^{(\bar\alpha)}=\overline{T_g^{(\alpha)}}$ in some basis. Note that the existence of a coherent state system on $R$ furnishing a resolution of the identity implies that $n_\alpha^U\leq d_\alpha$~\cite{DGHLM21}. The unitarily invariant measure on the unit vectors of $\mathcal{H}_{\rm phys}$, normalized such that $\int_{\mathcal{H}_{\rm phys}}d\psi=1$, will be denoted $d\psi$.
\begin{theorem}
\label{TheAvg}
The physical Hilbert space average $\mathcal{U}_{\rm phys}$ of conditional uniformity (``physical uniformity'') is
\[
   \int_{\mathcal{H}_{\rm phys}} d\psi\,\mathcal{U}(\psi_{S|R})=\frac 1 {d_{\rm phys}(d_{\rm phys}+1)}\sum_{\alpha\in\mathcal{\hat G}} \frac{n_\alpha^U n_{\bar\alpha}^V(n_\alpha^U+n_{\bar\alpha}^V)}{d_\alpha},
\]
where $d_{\rm phys}=\dim\,\mathcal{H}_{\rm phys}=\sum_{\alpha\in\mathcal{\hat G}} n_\alpha^U n_{\bar\alpha}^V$.
\end{theorem}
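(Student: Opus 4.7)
The plan is to express the integrand as a quadratic function of $\ketbra{\psi}{\psi}_{RS}$ and then apply the standard Haar second-moment formula on $\ch_\phys$. By Theorem~\ref{TheA}, $\mathcal{U}(\psi_{S|R})=\tr[(\Tr_R \ketbra{\psi}{\psi}_{RS})^2]$, and the purity (swap) trick rewrites this as
\begin{equation*}
\mathcal{U}(\psi_{S|R})=\tr\bigl[(\mathbf{1}_{RR'}\otimes F_{SS'})(\ketbra{\psi}{\psi}_{RS})^{\otimes 2}\bigr],
\end{equation*}
where $F_{SS'}$ swaps two copies of $S$. With $P_\phys=\int_\cg dg\, U_g\otimes V_g$ the projector onto $\ch_\phys$, and using that the full-system swap $F_{RSR'S'}$ commutes with the diagonal group action and hence preserves $\ch_\phys^{\otimes 2}$, the Haar second-moment formula on $\ch_\phys$ gives
\begin{equation*}
\int_{\ch_\phys}\! d\psi\,(\ketbra{\psi}{\psi})^{\otimes 2}=\frac{P_\phys^{\otimes 2}+F_{RSR'S'}\,P_\phys^{\otimes 2}}{d_\phys(d_\phys+1)}.
\end{equation*}

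Substituting, and using $F_{RSR'S'}=F_{RR'}\otimes F_{SS'}$ so that $(\mathbf{1}_{RR'}\otimes F_{SS'})F_{RSR'S'}=F_{RR'}\otimes \mathbf{1}_{SS'}$, reduces the problem to evaluating the two traces $T_1:=\tr[(\mathbf{1}_{RR'}\otimes F_{SS'})P_\phys^{\otimes 2}]$ and $T_2:=\tr[(F_{RR'}\otimes \mathbf{1}_{SS'})P_\phys^{\otimes 2}]$. Expanding $P_\phys^{\otimes 2}$ as a double Haar integral over $\cg\times\cg$ and applying $\tr[F(A\otimes B)]=\tr(AB)$, the first collapses to $T_1=\int dg\int dh\,\chi_U(g)\chi_U(h)\chi_V(gh)$, and $T_2$ is the same expression with $U$ and $V$ interchanged.

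The character integrals are then dispatched by Schur orthogonality. Substituting $k=gh$, expanding $\chi_U=\sum_\alpha n_\alpha^U\chi^{(\alpha)}$, and invoking the identity $\int dg\,\chi^{(\alpha)}(g)\chi^{(\gamma)}(g^{-1}k)=\delta_{\alpha\gamma}\chi^{(\alpha)}(k)/d_\alpha$ gives $\int dg\,\chi_U(g)\chi_U(g^{-1}k)=\sum_\alpha (n_\alpha^U)^2\chi^{(\alpha)}(k)/d_\alpha$. Integrating this against $\chi_V(k)$ with $\int dk\,\chi^{(\alpha)}(k)\chi^{(\beta)}(k)=\delta_{\bar\alpha,\beta}$ yields $T_1=\sum_\alpha (n_\alpha^U)^2 n_{\bar\alpha}^V/d_\alpha$, and the same computation with $U\!\leftrightarrow\! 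V$, together with the relabeling $\alpha\mapsto\bar\alpha$, gives $T_2=\sum_\alpha n_\alpha^U(n_{\bar\alpha}^V)^2/d_\alpha$. Summing and dividing by $d_\phys(d_\phys+1)$ produces the stated formula.

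No step here is technically deep, so the main obstacle is bookkeeping: keeping track of conjugate irreps (the invariant coupling on $\ch_\phys$ pairs $T^{(\alpha)}$ on $R$ with $T^{(\bar\alpha)}$ on $S$, which is why $n_{\bar\alpha}^V$ rather than $n_\alpha^V$ enters), and distinguishing the full-system swap $F_{RSR'S'}$ from its partial-swap factors $F_{RR'}$ and $F_{SS'}$. A cleaner conceptual alternative, should the character manipulations feel opaque, is to work directly in the isotypic decomposition $\ch_\phys\cong\bigoplus_\alpha \mathcal{M}_\alpha^U\otimes \mathcal{M}_{\bar\alpha}^V$ and track how $F_{SS'}$ and $F_{RR'}$ act on the invariant vectors in each $\ch_\alpha\otimes\ch_{\bar\alpha}$ block, which produces the same $1/d_\alpha$ factors directly from the squared norm of the invariant.
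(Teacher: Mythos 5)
Your proposal is correct and follows essentially the same route as the paper's proof in the Supplemental Material: the swap trick for $\tr[(\Tr_R\ketbra{\psi}{\psi})^2]$, the second-moment formula $\int d\psi\,(\ketbra{\psi}{\psi})^{\otimes 2}=\frac{2}{d_\phys(d_\phys+1)}\Pi_{\rm sym}\Pi_\phys^{\otimes 2}$, the identity $(\mathbf{1}_{RR'}\otimes F_{SS'})F_{RSR'S'}=F_{RR'}\otimes\mathbf{1}_{SS'}$, and the reduction of both traces to character convolutions dispatched by Schur orthogonality. The bookkeeping of conjugate irreps and the final combination $T_1+T_2=\sum_\alpha n_\alpha^U n_{\bar\alpha}^V(n_\alpha^U+n_{\bar\alpha}^V)/d_\alpha$ match the paper exactly.
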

The proof is given in Supplemental Material~\ref{app:proof3}. It relies on quantum information techniques to compute Hilbert space averages of polynomials via the replica trick, together with representation-theoretic orthogonality and convolution identities for the characters of the group $\mathcal{G}$.

Like $\mathcal{U}(\psi_{S|R})$, the physical uniformity $\mathcal{U}_{\rm phys}$ lies in the interval $(0,1]$, and it is independent of the choice of coherent state system $\{|g\rangle_R\}_{g\in\mathcal{G}}$. Its value, or rather that of the corresponding \emph{physical asymmetry} $\mathcal{A}_{\rm phys}:=-\log \mathcal{U}_{\rm phys}$, quantifies in representation-theoretic terms how much asymmetry $R$ induces on $S$ on average. Since entanglement entropy is upper-bounded by $\log d_R$, Theorem~\ref{TheA} tells us that $\mathcal{U}(\psi_{S|R})\geq 1/d_R$ for all $|\psi_{RS}\rangle\in\mathcal{H}_{\rm phys}$, and thus $\mathcal{A}_{\rm phys}\leq \log d_R$. This shows directly the \emph{necessity of large Hilbert space dimension} for a quantum reference frame $R$ to induce large amounts of asymmetry.

\emph{Example: maximum spin-$J$ reference frame.} Consider a quantum reference frame for the group ${\rm SU}(2)$, where we constrain $R$ to contain only irreps of spin $J$ or less, i.e.\ irreps labelled by $\alpha=0,\frac 1 2,1,\ldots,J$. As stated above Theorem \ref{TheAvg}, the resolution of the identity necessarily implies $n_\alpha^U\leq d_\alpha$. Thus, the best we can have is equality, i.e.\ $n_\alpha^U=d_\alpha=2\alpha+1$. Suppose that the system of interest is $S=L^2({\rm SU}(2))$, i.e.\ the infinite-dimensional system of wave functions on the group. By the Peter-Weyl theorem~\cite{Simon}, we have $n_\alpha^V=d_\alpha$. The physical Hilbert space has dimension $d_{\rm phys}=d_R=\sum_{k=0}^{2J} n_{k/2}^U n_{k/2}^V=\sum_{k=0}^{2J}(k+1)^2\sim 8  J^3/3$. The physical uniformity evaluates to $\mathcal{U}_{\rm phys}=2/(d_{\rm phys}+1)\sim 3 J^{-3}/4$. In particular, for $J\to\infty$, the conditional states $|\psi\rangle_{S|R}$ become, on average, perfectly distinguishable from their rotated versions $|\psi(g)\rangle_{S|R}$. This indicates that, for increasing $J$, this $R$ resembles more and more a perfect ``classical'' reference frame, and the above tells us how good this approximation is for finite $J$.

Here, physical asymmetry is close to its maximal value: $\mathcal{A}_{\rm phys}=\log d_R - \log 2 + \mathcal{O}(J^{-3})$. This implies that the average asymmetry $\overline{\mathcal{A}_{\rm phys}}:=\int_{\mathcal{H}_{\rm phys}}d\psi\, \mathcal{A}(\psi_{S|R})$ is also large. The latter follows because $(-\log)$ is convex, and therefore Jensen's inequality tells us that $\overline{\mathcal{A}_{\rm phys}}=\int_{\mathcal{H}_{\rm phys}}d\psi\,(-\log \mathcal{U}(\psi_{S|R}))\geq -\log \int_{\mathcal{H}_{\rm phys}}d\psi\,\mathcal{U}(\psi_{S|R})=\mathcal{A}_{\rm phys}$. However, if the average asymmetry is close to maximal, then \emph{most conditional states $|\psi\rangle_{S|R}$ must be almost maximally asymmetric}. This is reminiscent of the phenomenon in quantum information theory that almost all pure states are almost maximally entangled~\cite{HaydenLeungWinter}, and we can exploit this analogy rigorously via the correspondence of Theorem~\ref{TheA}.

\emph{Typical asymmetry.} Suppose that we pick a state $|\psi\rangle_{RS}$ at random from $\mathcal{H}_{\rm phys}$ according to the unitarily invariant measure. Then the techniques of~\cite{HaydenLeungWinter} (in particular Lemmas III.1 and III.8) together with the Lipschitz bound on entanglement entropy $\mathcal{H}_2$ from~\cite{HaydenWinter}  imply that
\[
   {\rm Prob}\left\{|\mathcal{A}(\psi_{S|R})-\overline{\mathcal{A}_{\rm phys}}|\geq\varepsilon\right\}\leq 2\exp\left(-\frac{d_{\rm phys}\varepsilon^2}{C\sqrt{d_R}}\right)
\]
for all $\varepsilon\geq 0$, where $C=72\pi^3\log 2$. Using $\overline{\mathcal{A}_{\rm phys}}\geq \mathcal{A}_{\rm phys}$, the right-hand side also upper-bounds the probability that $\mathcal{A}(\psi_{S|R})\leq\mathcal{A}_{\rm phys}-\varepsilon$. In the ${\rm SU}(2)$-example above, for every fixed $\varepsilon>0$, the probability that a random $|\psi\rangle_{S|R}$ has conditional asymmetry less than $\log d_R-\log 2-\varepsilon$ is exponentially small in $J^{3/2}$: almost all conditional states are indeed almost maximally asymmetric.

\emph{Example: periodic quantum clock.} Consider a harmonic oscillator $S$ with Hamiltonian $\hat H_S=\frac{\hat p^2}{2m}+\frac 1 2 m \omega^2 \hat x^2=\hbar \omega \sum_{n=0}^\infty \left(n+\frac 1 2\right)|n\rangle\langle n|$. It evolves periodically in time with period $\tau=4\pi/\omega$, hence we can measure time with a periodic quantum clock $R$, carrying a representation $U_g$ of $\mathcal{G}={\rm U}(1)\simeq [0,2\pi)$. This generalizes the notion of clocks used in the PWM with $\mathcal{G}=\mathbb{R}$ to the periodic case. The representations of ${\rm U}(1)$ are labelled by integers $\alpha\in\mathbb{Z}$, with $T^{(\alpha)}_g=\exp(i\alpha g)$ and $d_\alpha=1$. Suppose that $R$ is an imperfect clock, with $U_g=\bigoplus_{\alpha=-k}^k e^{i\alpha g}|\alpha\rangle\langle\alpha|$ where $k\geq 1$ is odd and finite. Then $n_\alpha^U=1$ if $-k\leq \alpha\leq k$ and $0$ otherwise. For $k\to\infty$, one would recover a perfect clock, i.e.\ a periodic version of the PWM, with associated Hilbert space $L^2({\rm U}(1))$.

Up to arbitrary changes of phase in the one-dimensional subspaces spanned by $|\alpha\rangle$, the unique choice of coherent state system yielding a resolution of the identity is via $|e\rangle_R=\frac 1 {\sqrt{d_R}}\sum_{\alpha=-k}^k |\alpha\rangle$ with $d_R=2k+1$. Interestingly, in a scenario where we are given a quantum clock and would like to determine time as accurately as possible by measuring the clock, this state has been shown by Holevo~\cite{Holevo} to generate the POVM which is optimal for a large class of cost functions (see also~\cite{BuzekDerkaMassar}). However, Holevo's results are not directly applicable to our scenario, since our clock is in a joint stationary state $|\psi\rangle_{RS}\in\mathcal{H}_{\rm phys}$ with the harmonic oscillator.

Associating time $t$ with the group element $g\in[0,2\pi)$ via $g=2\pi t/\tau$, we get that $V_g=\exp\left(-i\frac{\tau g}{2\pi\hbar}\hat H_S\right)$ represents time translations on $S$. On the energy eigenstates, we have $V_g|n\rangle=\exp(-ig(2n+1))|n\rangle$, hence $n_\alpha^V=1$ if $\alpha=-1,-3,-5,\ldots$ and $0$ otherwise. We have
\begin{equation}
   \mathcal{H}_{\rm phys}={\rm span}\left\{|2n+1\rangle_R|n\rangle_S\,\,|\,\, 0\leq n\leq(k-1)/2\right\},
   \label{eqPhys}
\end{equation}
and so $d_{\rm phys}=(k+1)/2$. The physical uniformity becomes $\mathcal{U}_{\rm phys}=4/(k+3)$, which is the physical Hilbert space average of $\mathcal{U}(\psi_{S|R})=\int_0^\tau \frac{dt}\tau |\langle \tilde\psi_{S|R}(0)|\tilde\psi_{S|R}(t)\rangle|^2$, with the time-evolved state $\tilde\psi(t):=\psi(2\pi t/\tau)$. This can be interpreted as an instance of a \emph{time-energy uncertainty relation}: the larger the range of energies in the clock $R$ (i.e.\ the larger $k$), the more distinguishable will the conditional states be from their time-translated versions on average.

Compare this with what we can learn from the the Mandelstam-Tamm quantum speed limit~\cite{MandelstamTamm,AnandanAharonov,Hoernedal,MarvianSpekkensZanardi}: $|\langle\tilde\psi_{S|R}(0)|\tilde\psi_{S|R}(t)\rangle|^2\geq \cos_*^2(\Delta \hat H_S\cdot t/\hbar)$, where $\cos_*(x)=\cos x$ if $0\leq x \leq \pi/2$ and $0$ otherwise. This implies
\begin{eqnarray*}
\mathcal{U}(\psi_{S|R})&=&\int_0^\tau \frac {dt}\tau |\langle\tilde\psi_{S|R}(0)|\tilde\psi_{S|R}(t)\rangle|^2\\
&\geq& \int_0^{\min\{t_0,\tau\}}\frac{dt}\tau\cos^2\left(\frac{t\Delta\hat H_S}\hbar\right),
\end{eqnarray*}
where $t_0=\pi\hbar/(2\Delta\hat H_S)$. Due to~(\ref{eqPhys}), every $\psi_{S|R}$ is supported on the subspace spanned by $|0\rangle_S,\ldots,|(k-1)/2\rangle_S$, hence these states have $\Delta\hat H_S\leq\hbar\omega k/2$. Most states $\psi_{S|R}$ will have $\Delta\hat H_S\geq \hbar\omega/8$ if $k$ is large, hence $t_0\leq\tau$. For those $\psi_{S|R}$, the integral evaluates to $\mathcal{U}(\psi_{S|R})\geq\omega\hbar/(16\Delta\hat H_S)\geq 1/(8k)$. Our result on $\mathcal{U}_{\rm phys}$ gives essentially the same scaling in $k$, but improves on the Mandelstam-Tamm result by a factor of $32$ on average. Hence, Theorem~\ref{TheAvg} provides a representation-theoretic time-energy trade-off and generalizes it to more general groups than time translations.

\emph{Conclusions.} We have shown that there is an exact quantitative correspondence between the amount of entanglement in a globally symmetric quantum state and the amount of asymmetry in the conditional state relative to an internal quantum reference frame, leading to several insights on the quality of imperfect reference frames, speed limits, and typicality of asymmetry. We have also begun to explore the resource-theoretic consequences of our duality in Theorem~\ref{TheResource}, using the close relation between conditional uniformity and characteristic functions. It would be interesting to explore further how resource-theoretic notions can be imported into this ``perspective-neutral'' scenario. It would also be worthwhile to explore the generalization to non-compact groups~\cite{DGHLM21} such as the Lorentz and Galilei groups. These possible extensions notwithstanding, we believe that our results shed significant light on the quantum information-theoretic and structural foundations of internal quantum reference frames.\\

\emph{Acknowledgments.} We are grateful to Philipp A.\ H\"ohn for helpful discussions. AD was supported by the Austrian Science Fund (FWF) through BeyondC (F7103-N48), the John Templeton Foundation (ID\# 61466) as part of The Quantum Information Structure of Spacetime (QISS) Project (qiss.fr) and the European Commission via Testing the Large-Scale Limit of Quantum Mechanics (TEQ) (No. 766900) project. SLL and MPM acknowledge support from the Austrian Science Fund (FWF) via project P 33730-N.

\onecolumngrid

\section*{Supplemental Material}
In the following, we will assume that not only $R$ but also $S$ is finite-dimensional. In the main text, we have been discussing cases where $\dim\, S=\infty$, but in all these cases, it is possible to restrict to a finite-dimensional subspace of $S$ due to $d_{\rm phys}<\infty$. Hence the assumption $\dim\, S<\infty$ is no loss of generality.

\section{Some definitions and lemmas}
\label{SecSomeLemmas}
We begin with a definition of \emph{positive definite function} used in the proof of Theorem~\ref{TheResource}. Our exposition follows~\cite{Marvian}.
\begin{definition}
Let $\mathcal{G}$ be a compact Lie group. A continuous function $f:\mathcal{G}\to\mathbb{C}$ is called \emph{positive definite} if
\[
   \int_{\mathcal{G}}dg\int_{\mathcal{G}} dh\, \overline{\varphi(g)}f(g^{-1}h)\varphi(h)\geq 0
\]
for all $\varphi\in L^1(\mathcal{G})$.
\end{definition}
Positive definite functions are also called \emph{functions of positive type}, see~\cite[Sec.\ 3.3]{Folland} for further mathematical details.

Clearly, $\{\bar\varphi\,\,|\,\,\varphi\in L^1(\mathcal{G})\}=L^1(\mathcal{G})$, and so taking the complex conjugate of the inequality above shows that if $f$ is positive definite then so is $\bar f$. This is also the content of Proposition 3.14 in~\cite{Folland}.

\begin{lemma}
\label{LemNormalized}
The \enquote{reduction map} $R^{(g)}: \ch_\phys \to S,\  R^{(g)}= \sqrt{d_R}\bra{g}_R \otimes \mathbf{1}_S$, is an isometry.
\end{lemma}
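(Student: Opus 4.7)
The plan is to verify that $(R^{(g)})^\dagger R^{(g)} = \mathbf{1}_{\mathcal{H}_{\rm phys}}$, i.e.\ that $\|R^{(g)}|\psi\rangle_{RS}\|^2 = 1$ for every normalized $|\psi\rangle_{RS}\in\mathcal{H}_{\rm phys}$. Expanding the definition immediately gives
\[
   \|R^{(g)}|\psi\rangle_{RS}\|^2 = d_R\,\langle\psi|_{RS}\bigl(|g\rangle\langle g|_R\otimes\mathbf{1}_S\bigr)|\psi\rangle_{RS},
\]
so everything reduces to evaluating this matrix element. Naively the right-hand side depends on $g$ and on the choice of $|g\rangle_R$, so the key is to use the two structural ingredients we have: invariance of $|\psi\rangle_{RS}$ under $U_h\otimes V_h$, and the resolution of identity supplied by the coherent state system.

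The concrete steps I would carry out are: (i) replace $|\psi\rangle_{RS}$ on both sides by $(U_h\otimes V_h)|\psi\rangle_{RS}$, which is permitted for any $h\in\mathcal{G}$ because $|\psi\rangle_{RS}\in\mathcal{H}_{\rm phys}$; (ii) integrate over $h$ with the Haar measure, which leaves the left-hand side unchanged; (iii) use $U_h^\dagger |g\rangle_R = |h^{-1}g\rangle_R$ and $V_h^\dagger\mathbf{1}_S V_h = \mathbf{1}_S$ to rewrite the conjugated operator as $|h^{-1}g\rangle\langle h^{-1}g|_R\otimes\mathbf{1}_S$; (iv) change variables $k:=h^{-1}g$, exploiting left-invariance of the Haar measure, so that the integrand becomes independent of $g$; (v) recognize the resulting integral as $d_R\int dk\,|k\rangle\langle k|_R = \mathbf{1}_R$ by the resolution of identity. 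The matrix element then collapses to $\langle\psi|\psi\rangle_{RS}=1$, as required.

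There is no genuine obstacle; the only subtlety is the bookkeeping of the group element in step (iii), where one must be careful that $U_h$ maps the seed $|e\rangle_R$ (and hence $|g\rangle_R$) in the correct direction so that $U_h^\dagger|g\rangle_R = |h^{-1}g\rangle_R$, and that the measure invariance used in step (iv) is the left-invariance $d(h^{-1}g)=dh$. Once these are verified, the argument shows simultaneously that $\|R^{(g)}|\psi\rangle_{RS}\|$ is independent of both $g$ and of the choice of $|e\rangle_R$, which also retroactively justifies the normalization factor $\sqrt{d_R}$ in the definition of the conditional state $|\psi(g)\rangle_{S|R}$ used in Eq.~(\ref{eqConditional}).
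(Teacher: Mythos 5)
Your proposal is correct and follows essentially the same route as the paper: exploit the invariance $|\psi\rangle_{RS}=(U_h\otimes V_h)|\psi\rangle_{RS}$, average over $h$ with the (normalized, bi-invariant) Haar measure, and collapse the resulting integral via the resolution of identity $\int dg\,|g\rangle\langle g|_R=\frac{1}{d_R}\mathbf{1}_R$. The only cosmetic differences are that the paper first reduces general $g$ to $g=e$ using $|\psi(g)\rangle_S=V_g|\psi(e)\rangle_S$ rather than your change of variables, and works with the full inner product rather than the norm (which are equivalent by polarization).
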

\begin{proof}
Since $|\psi(g)\rangle_S=V_g |\psi(e)\rangle_S$, it is sufficient to consider the case $g=e$. This can be done as follows:
\begin{eqnarray*}
   \langle R^{(e)}\psi|R^{(e)}\phi\rangle_S&=&d_R \int_{\mathcal{G}}dg\, \langle\psi|_{RS} U_g\otimes V_g (|e\rangle_R\otimes\mathbf{1}_S)(\langle e|_R\otimes\mathbf{1}_S)U_g^\dagger\otimes V_g^\dagger |\phi\rangle_{RS}\\
   &=&d_R \langle\psi|_{RS}\left(\int_{\mathcal{G}} dg\,|g\rangle\langle g|_R\right) |\phi\rangle_{RS}=\bracket{\psi}{\phi}_{RS},
\end{eqnarray*}
where we have used the invariance $|\psi\rangle_{RS}=U_g^\dagger\otimes V_g^\dagger |\psi\rangle_{RS}$ for $|\psi\rangle_{RS}\in\mathcal{H}_{\rm phys}$.
\end{proof}
Note that Lemma \ref{LemNormalized} was already shown to hold in Ref.~\cite{DGHLM21}.

\begin{lemma}
The coherent state system $\{\ket{g}_R\}_{g\in\mathcal{G}}$ can be used to calculate the partial trace $\Tr_R$ as follows:
\begin{equation*}
\Tr_R\left(\rho_{RS}\right)=d_R\int_{\cg} dg \bra{g}_R \otimes \mathbf{1}_S \rho_{RS}\ket{g}_R \otimes \mathbf{1}_S .
\end{equation*}
\label{partraceandg}
\end{lemma}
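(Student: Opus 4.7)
The plan is to reduce to a one-line computation using the resolution of identity $\int_{\mathcal{G}} dg\, |g\rangle\langle g|_R = d_R^{-1}\, \mathbf{1}_R$ that is assumed to hold for the coherent state system. Both sides of the claimed identity are linear functions of $\rho_{RS}$, so by linearity it suffices to verify the equality on rank-one product operators of the form $\rho_{RS} = A \otimes B$, where $A$ acts on $R$ and $B$ acts on $S$; the general case then follows because every operator on $\mathcal{H}_R \otimes \mathcal{H}_S$ is a finite linear combination of such products (using finite-dimensionality, as noted at the top of the Supplemental Material).

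For $\rho_{RS}=A\otimes B$, the left-hand side is immediately $\Tr_R(A\otimes B) = \Tr(A)\, B$ by the definition of partial trace. For the right-hand side I would pull the $B$ factor out, giving
\[
   d_R \int_{\mathcal{G}} dg\, \langle g|_R A |g\rangle_R \otimes B
   = d_R\, \Tr\!\left( A \int_{\mathcal{G}} dg\, |g\rangle\langle g|_R \right) B,
\]
and then substitute the resolution of identity $\int_{\mathcal{G}} dg\, |g\rangle\langle g|_R = d_R^{-1}\mathbf{1}_R$ to obtain $\Tr(A)\, B$, matching the left-hand side. Extending by linearity concludes the proof.

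There is essentially no obstacle here: the only non-trivial ingredient is the normalization condition on $\{|g\rangle_R\}_{g\in\mathcal{G}}$, which is part of the definition of a coherent state system in this paper. A minor bookkeeping point worth stating explicitly is the factor $d_R$: it compensates for the $d_R^{-1}$ in the resolution of identity, which is what forces the constant in the lemma to be exactly $d_R$ rather than some other normalization. (Alternatively, one could give a dual-space proof by testing against arbitrary operators $X$ on $S$: verify that $\Tr_S[X \cdot \mathrm{RHS}]=\Tr_{RS}[(\mathbf{1}_R\otimes X)\rho_{RS}]$ using the resolution of identity, and invoke non-degeneracy of the Hilbert--Schmidt pairing. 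But the linearity-plus-product-operators route is shorter.)
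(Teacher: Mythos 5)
Your proof is correct, but it takes a genuinely different route from the paper's. You reduce by linearity to product operators $\rho_{RS}=A\otimes B$ and evaluate both sides directly, which collapses the lemma to the single identity $d_R\int_{\mathcal{G}}dg\,\langle g|A|g\rangle=\Tr(A)$, i.e.\ precisely the resolution-of-identity condition $\int_{\mathcal{G}}dg\,|g\rangle\langle g|_R=d_R^{-1}\mathbf{1}_R$. The paper instead runs exactly the dual argument you sketch parenthetically at the end: it names the right-hand side $\sigma_S$, pairs it against an arbitrary $X_S\in\mathcal{B}(S)$ in the Hilbert--Schmidt inner product, rewrites the pairing as a trace over $RS$ using an ONB of $S$, applies the resolution of identity there, and concludes $\sigma_S=\Tr_R(\rho_{RS})$ from non-degeneracy of the pairing. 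Both arguments rest on the same single ingredient and the same bookkeeping of the factor $d_R$, which, as you correctly note, is forced by the normalization $c=1/d_R$. Your route is marginally more elementary but relies on product operators spanning $\mathcal{B}(\mathcal{H}_R\otimes\mathcal{H}_S)$, which is automatic here because the Supplemental Material assumes both $R$ and $S$ finite-dimensional; the paper's duality route avoids that decomposition altogether and is the version that generalizes most painlessly. Either way, the proof is complete and there is no gap.
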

\begin{proof}
Let us define $\displaystyle\sigma_S\coloneqq d_R\int_{\cg} dg \bra{g}_R \otimes \mathbf{1}_S \rho_{RS}\ket{g}_R \otimes \mathbf{1}_S$. Let $X_S\in\mathcal{B}(S)$ be an arbitrary operator, then
\begin{eqnarray*}
\left(\sigma_S, X_S\right)_S&=&\Tr_S \left(\sigma^{\dagger}_S X_S\right)= d_R\int_{\cg} dg \sum_{i=1}^{d_S} \bra{g}_R \otimes \bra{i}_S \rho^\dagger_{RS}\mathbf{1}\otimes X_S \ket{g}_R \otimes \ket{i}_S \nonumber\\
&=& d_R\int_{\cg} dg\sum_{i=1}^{d_S} \Tr_{RS}\left(\rho^\dagger_{RS}\mathbf{1}_R\otimes X_S \ket{g}\bra{g}_R\otimes \ket{i}\bra{i}_S\right)= \Tr_{RS}\left(\rho^\dagger_{RS}\mathbf{1}_R\otimes X_S \right)\nonumber\\
&=& \Tr_S\left(\left(\Tr_R(\rho_{RS})\right)^\dagger X_S\right)=\left(\Tr_R(\rho_{RS})\ , X_S\right)_S,
\end{eqnarray*}
where $\{\ket{i}_S\}_{i=1}^{d_S}$ is an ONB of $S$ and $\left(\;,\;\right)_S$ denotes the Hilbert-Schmidt inner product of operators.
\end{proof}

\begin{lemma}
Let $\ket{\psi}_{RS}\in\ch_\phys$. The reduced state of $S$ is invariant under the group action: for all $g\in\mathcal{G}$,
\begin{equation*}
\Tr_R\left( \ket{\psi}\bra{\psi}_{RS}\right))=V_g\Tr_R\left( \ket{\psi}\bra{\psi}_{RS}\right))V_g^\dagger.
\end{equation*}
\label{invredstate}
\end{lemma}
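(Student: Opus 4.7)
The plan is to start from the invariance of $\ket{\psi}_{RS}$ under the joint action of the group, and push this invariance through the partial trace. Specifically, since $\ket{\psi}_{RS}\in\ch_\phys$, we have $\ket{\psi}_{RS}=U_g\otimes V_g\ket{\psi}_{RS}$ for every $g\in\mathcal{G}$, and therefore
\[
\ket{\psi}\bra{\psi}_{RS} = (U_g\otimes V_g)\,\ket{\psi}\bra{\psi}_{RS}\,(U_g^\dagger\otimes V_g^\dagger).
\]
Applying $\Tr_R$ to both sides will give the claim once I move $V_g$ outside the partial trace and eliminate the $U_g$ on $R$.

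The first step is to observe that because $V_g$ acts only on $S$, it can be pulled out of the partial trace over $R$, giving
\[
\Tr_R\!\left[(U_g\otimes V_g)\,\ket{\psi}\bra{\psi}_{RS}\,(U_g^\dagger\otimes V_g^\dagger)\right] = V_g\,\Tr_R\!\left[(U_g\otimes\mathbf{1}_S)\,\ket{\psi}\bra{\psi}_{RS}\,(U_g^\dagger\otimes\mathbf{1}_S)\right]V_g^\dagger.
\]
The second step is to use cyclicity of the trace on $R$ (applied at the level of the partial trace via, e.g., an orthonormal basis $\{\ket{i}_R\}$): the factor $U_g\otimes\mathbf{1}_S$ on the left and $U_g^\dagger\otimes\mathbf{1}_S$ on the right can be cycled and combined to $\mathbf{1}_{RS}$, so that the bracketed expression reduces to $\Tr_R\ket{\psi}\bra{\psi}_{RS}$.

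Combining these two steps with the invariance equation above yields
\[
\Tr_R\ket{\psi}\bra{\psi}_{RS} = V_g\,\Tr_R\ket{\psi}\bra{\psi}_{RS}\,V_g^\dagger,
\]
which is exactly the claim. There is really no technical obstacle here; both ingredients (pulling an $S$-side unitary out of $\Tr_R$, and invariance of $\Tr_R$ under conjugation by $R$-side unitaries) are standard properties of the partial trace, and the entire argument is a two-line manipulation once the invariance of $\ket{\psi}_{RS}$ is invoked. If a more elementary derivation is preferred, one can equivalently use Lemma~\ref{partraceandg} to write $\Tr_R\ket{\psi}\bra{\psi}_{RS}=d_R\int_{\cg}dh\,\bra{h}_R\!\otimes\!\mathbf{1}_S\,\ket{\psi}\bra{\psi}_{RS}\,\ket{h}_R\!\otimes\!\mathbf{1}_S$, substitute the invariant form of $\ket{\psi}\bra{\psi}_{RS}$, and use invariance of the Haar measure after the change of variables $h\mapsto g^{-1}h$; this route makes the disappearance of the $R$-side unitary completely explicit.
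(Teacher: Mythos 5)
Your proof is correct and rests on the same two ingredients as the paper's: the invariance $\ket{\psi}_{RS}=U_g\otimes V_g\ket{\psi}_{RS}$ and the standard behaviour of $\Tr_R$ under local unitaries on $R$ and on $S$. The paper merely phrases the same computation dually, by testing $V_g\Tr_R(\ketbra{\psi}{\psi}_{RS})V_g^\dagger$ against an arbitrary operator $X_S$ in the Hilbert--Schmidt inner product and using cyclicity of the full trace, so the two arguments are essentially identical.
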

\begin{proof}
Let $X_S\in\mathcal{B}(S)$ be an arbitrary operator, then
\begin{eqnarray*}
\left(V_g\Tr_R\left( \ket{\psi}\bra{\psi}_{RS}\right)V_g^\dagger,X_S\right)_S &=&\Tr_S\left(V_g\Tr_R\left( \ket{\psi}\bra{\psi}_{RS}\right)V_g^\dagger X_S\right)
= \Tr_{RS}\left(\ket{\psi}\bra{\psi}_{RS}\cdot\mathbf{1}_R\otimes\left(V_g^\dagger X_S V_g\right)\right)\nonumber\\
&=&\Tr_{RS}\left(U_g^\dagger\otimes V_g^\dagger \cdot\ket{\psi}\bra{\psi}_{RS}\cdot\mathbf{1}_R\otimes X_S \cdot U_g\otimes V_g\right)
=\Tr_{RS}\left(\ket{\psi}\bra{\psi}_{RS}\cdot\mathbf{1}_R\otimes X_S\right)\nonumber\\
&=&\Tr_S\left(\Tr_R\left( \ket{\psi}\bra{\psi}_{RS}\right)X_S\right)
= \left(\Tr_R\left( \ket{\psi}\bra{\psi}_{RS}\right),X_S\right)_S. 
\end{eqnarray*}
In the third line, we used $U^\dagger_g\otimes V^\dagger_g\ket{\psi}_{RS}=\ket{\psi}_{RS}$.
\end{proof}
The flip operator $\mathbb{F}$ on a product Hilbert space $\mathcal{H}_1\otimes\mathcal{H}_2$ is defined by linear extension of $\mathbb{F}|\varphi\rangle\otimes|\psi\rangle=|\psi\rangle\otimes|\varphi\rangle$.
\begin{lemma}
For operators $A$ and $B$, we have $\tr(\mathbb{F} A\otimes B) = \tr (A \cdot B)$.    \label{tensor product flip trick}
\end{lemma}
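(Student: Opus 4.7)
The plan is to prove this well-known identity by expanding the flip operator in an orthonormal basis and then collapsing the resulting sums via the completeness relation. Since $\mathbb{F}$ is only defined when the two tensor factors are isomorphic, I assume $\mathcal{H}_1=\mathcal{H}_2$, and I take $A,B$ to be operators on this common Hilbert space.

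First, I would fix an orthonormal basis $\{|i\rangle\}_i$ and use the standard representation
\[
   \mathbb{F}=\sum_{i,j}|i\rangle\langle j|\otimes|j\rangle\langle i|,
\]
which follows immediately by checking the action on product basis vectors. Next, I would use the product structure of the trace: for any operators $X,Y$, $\tr(X\otimes Y)=\tr(X)\tr(Y)$. Applying this term by term gives
\[
   \tr\bigl(\mathbb{F}(A\otimes B)\bigr)=\sum_{i,j}\tr\bigl(|i\rangle\langle j|A\bigr)\tr\bigl(|j\rangle\langle i|B\bigr)=\sum_{i,j}\langle j|A|i\rangle\langle i|B|j\rangle.
\]

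Finally, I would collapse the inner sum using $\sum_i|i\rangle\langle i|=\mathbf{1}$, obtaining $\sum_j\langle j|AB|j\rangle=\tr(AB)$, which is the claimed identity. There is essentially no obstacle here: the only subtlety is making sure the basis expansion of $\mathbb{F}$ is the correct one, but this is verified directly on product vectors $|k\rangle\otimes|\ell\rangle$, where $\mathbb{F}|k\rangle\otimes|\ell\rangle=|\ell\rangle\otimes|k\rangle$ matches $\sum_{i,j}\delta_{jk}\delta_{i\ell}|i\rangle\otimes|j\rangle=|\ell\rangle\otimes|k\rangle$.
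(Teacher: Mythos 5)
Your proof is correct and follows essentially the same route as the paper's: a direct computation in an orthonormal basis yielding $\sum_{i,j}\langle j|A|i\rangle\langle i|B|j\rangle=\tr(AB)$. The only cosmetic difference is that you expand $\mathbb{F}$ as $\sum_{i,j}\ketbra{i}{j}\otimes\ketbra{j}{i}$ and use multiplicativity of the trace over tensor products, whereas the paper expands $A\otimes B$ in matrix elements and lets $\mathbb{F}$ act on the kets; both reduce to the same index contraction.
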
 
\begin{proof}
A simple calculation gives
\begin{align*}
    \tr(\mathbb{F} A\otimes B) &= \tr \left(\mathbb{F} \sum_{i,j,s,k} a_{ij} b_{sk} \ketbra{i}{j} \otimes \ketbra{s}{k}\right)= \tr \left(\sum_{i,j,s,k} a_{ij} b_{sk} \ketbra{s}{j} \otimes \ketbra{i}{k}\right) = \sum_{i,j,s,k} a_{ij} b_{sk} \delta_{sj} \delta_{ik}  \\
    &= \sum_{ij} a_{ij} b_{ji} = \tr (A \cdot B) .
\end{align*}
\vskip -2em
\end{proof}
\begin{lemma}
We have $\tr(\rho_S^2)=\tr(\mathbf{1}_{RR'} \otimes \mathbb{F}_{SS'} (\ketbra{\psi}{\psi}_{RS} \otimes \ketbra{\psi}{\psi}_{R'S'}))$. \label{trace of squared operator}
\end{lemma}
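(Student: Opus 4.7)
The plan is to chain together two standard manipulations: the swap trick for squared traces, and the adjoint relation between partial trace and tensoring with the identity.

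First, I would abbreviate $\rho_S := \Tr_R |\psi\rangle\langle\psi|_{RS}$ and apply the previous lemma (Lemma~\ref{tensor product flip trick}) with $A = B = \rho_S$ on the bipartite space $S \otimes S'$. This immediately gives the \emph{swap trick}
\[
   \tr(\rho_S^2) = \tr_{SS'}\bigl(\mathbb{F}_{SS'}\,(\rho_S \otimes \rho_S)\bigr).
\]
This reduces the problem to expressing $\rho_S \otimes \rho_S$ in terms of the original pure states on $RS$ and $R'S'$.

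Second, I would introduce a second copy $R'S'$ of $RS$ and write $\rho_S \otimes \rho_S = \Tr_{RR'}\bigl(|\psi\rangle\langle\psi|_{RS} \otimes |\psi\rangle\langle\psi|_{R'S'}\bigr)$, using that the partial trace factors over tensor products. Then I would move the partial trace over $RR'$ out through the flip operator $\mathbb{F}_{SS'}$, which acts only on $SS'$ and therefore commutes with anything on $RR'$. Using the elementary identity $\tr_{S}(X_S \cdot \Tr_R Y_{RS}) = \tr_{RS}((\mathbf{1}_R \otimes X_S)\,Y_{RS})$ (applied on the doubled system $RR'SS'$), the outer trace and the partial trace combine into a single full trace on $RR'SS'$, yielding
\[
   \tr(\rho_S^2) = \tr\bigl((\mathbf{1}_{RR'} \otimes \mathbb{F}_{SS'})\,(|\psi\rangle\langle\psi|_{RS} \otimes |\psi\rangle\langle\psi|_{R'S'})\bigr),
\]
which is the claim.

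There is no real obstacle here — both steps are standard — but the one thing worth being careful about is the bookkeeping of which subsystem each operator acts on. In particular, one must verify that the identity $\tr_{S}(X_S \Tr_R Y_{RS}) = \tr_{RS}((\mathbf{1}_R \otimes X_S)Y_{RS})$ is applied correctly on the doubled Hilbert space, with $X = \mathbb{F}_{SS'}$ (acting on $SS'$) and $Y = |\psi\rangle\langle\psi|_{RS} \otimes |\psi\rangle\langle\psi|_{R'S'}$ (with its partial trace taken over $RR'$). Once the indices are arranged consistently, the result follows in one line.
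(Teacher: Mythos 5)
Your proof is correct and follows exactly the same route as the paper's: apply the flip-trick lemma to write $\tr(\rho_S^2)=\tr(\mathbb{F}_{SS'}\,\rho_S\otimes\rho_{S'})$, express $\rho_S\otimes\rho_{S'}$ as $\Tr_{RR'}$ of the doubled pure state, and then absorb the partial trace into a full trace against $\mathbf{1}_{RR'}\otimes\mathbb{F}_{SS'}$. The bookkeeping point you flag is the only subtlety, and you handle it correctly.
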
 
\begin{proof}
We can easily see that
\begin{align*}
    \tr(\rho_S^2)&= \tr\left(\mathbb{F}_{SS'}\rho_S \otimes \rho_{S'} \right)
    = \tr(\mathbb{F}_{SS'} \Tr_{RR'} (\ketbra{\psi}{\psi}_{RS} \otimes \ketbra{\psi}{\psi}_{R'S'}))=\tr(\mathbf{1}_{RR'} \otimes \mathbb{F}_{SS'} (\ketbra{\psi}{\psi}_{RS} \otimes \ketbra{\psi}{\psi}_{R'S'})).
\end{align*}
\vskip -2em
\end{proof}

\begin{lemma}
The dimension of the physical Hilbert space $\ch_\phys$ is given by
$d_\phys=\sum_\alpha n_\alpha^U n_{\bar{\alpha}}^V$,
where $U$ and $V$ are the representations carried by reference frame $R$ and system $S$ respectively, and $n_\alpha^U$ is the multiplicity of the irrep $T^{(\alpha)}$ in $U$ whereas $n_{\bar{\alpha}}^V$ is the multiplicity of the conjugate irrep $T^{(\bar{\alpha})}$ in $V$. \label{dimmult}
\end{lemma}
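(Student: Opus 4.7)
My plan is to compute the dimension of the invariant subspace via the standard ``group-averaging'' projector, and then reduce the resulting integral of characters to a multiplicity count using Schur orthogonality. Concretely, $\mathcal{H}_{\rm phys}$ is by definition the fixed-point subspace of the tensor-product representation $g\mapsto U_g\otimes V_g$ on $\mathcal{H}_R\otimes\mathcal{H}_S$, so $d_{\rm phys}$ equals the multiplicity of the trivial irrep in $U\otimes V$.

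First I would introduce the projector onto the invariant subspace,
\[
   P_{\rm phys}:=\int_{\mathcal{G}}dg\,U_g\otimes V_g,
\]
which is well known (and easily verified by invariance of the Haar measure) to be the orthogonal projector onto $\mathcal{H}_{\rm phys}$. Taking its trace gives
\[
   d_{\rm phys}=\tr(P_{\rm phys})=\int_{\mathcal{G}}dg\,\chi_U(g)\,\chi_V(g),
\]
where $\chi_U(g)=\tr U_g$ and $\chi_V(g)=\tr V_g$.

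Next I would substitute the irrep decompositions $\chi_U=\sum_\alpha n_\alpha^U\chi_\alpha$ and $\chi_V=\sum_\beta n_\beta^V\chi_\beta$, and use two elementary facts from representation theory of compact groups (see e.g.\ \cite{Simon}): (i) $\overline{\chi_\beta(g)}=\chi_{\bar\beta}(g)$ since $T^{(\bar\beta)}=\overline{T^{(\beta)}}$ in a suitable basis, and (ii) the Schur orthogonality relation $\int_{\mathcal{G}}dg\,\chi_\alpha(g)\,\overline{\chi_\gamma(g)}=\delta_{\alpha\gamma}$. Writing $\chi_\alpha(g)\chi_\beta(g)=\chi_\alpha(g)\overline{\chi_{\bar\beta}(g)}$ and applying (ii) yields $\int_{\mathcal{G}}dg\,\chi_\alpha(g)\chi_\beta(g)=\delta_{\alpha,\bar\beta}$, which collapses the double sum to
\[
   d_{\rm phys}=\sum_{\alpha,\beta}n_\alpha^U n_\beta^V\,\delta_{\alpha,\bar\beta}=\sum_{\alpha\in\hat{\mathcal{G}}} n_\alpha^U n_{\bar\alpha}^V,
\]
as claimed.

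There is essentially no obstacle here: the only step that requires a tiny bit of care is keeping track of complex conjugation, since Schur orthogonality is phrased with $\overline{\chi_\gamma}$ while the projector trace produces a product $\chi_U\chi_V$ without conjugation. This is precisely what turns the ``diagonal'' pairing $\alpha=\gamma$ into the ``conjugate'' pairing $\beta=\bar\alpha$, and explains the appearance of $n_{\bar\alpha}^V$ rather than $n_\alpha^V$ in the final formula.
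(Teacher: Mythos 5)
Your proof is correct and follows essentially the same route as the paper's: both reduce the claim to computing the multiplicity of the trivial irrep in $U\otimes V$ via the identity $\int_{\mathcal{G}}dg\,\chi^{(\alpha)}(g)\chi^{(\beta)}(g)=\delta_{\bar\alpha\beta}$, obtained from Schur orthogonality and $\overline{\chi^{(\beta)}}=\chi^{(\bar\beta)}$. The only cosmetic difference is that the paper phrases this as evaluating the Clebsch--Gordan coefficient $c_1^{\alpha\beta}$ of the trivial irrep, whereas you take the trace of the group-averaging projector $\int_{\mathcal{G}}dg\,U_g\otimes V_g$ directly; these are the same computation.
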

\begin{proof}
Let us decompose the representations $U_g$ and $V_g$ into irreps:
\[
    U_g = \bigoplus_{\alpha\in \hat{\cg}} n_\alpha^U T_g^{(\alpha)}, \qquad
    V_g = \bigoplus_{\beta\in \hat{\cg}} n_\beta^V T_g^{(\beta)}.
\]
Then, we can write 
\begin{align*}
    U_g \otimes V_g &= \left( \bigoplus_{\alpha\in \hat{\cg}} n_\alpha^U T_g^{(\alpha)} \right) \otimes \left( \bigoplus_{\beta\in \hat{\cg}} n_\beta^V T_g^{(\beta)} \right) = \bigoplus_{\alpha, \beta\in \hat{\cg}} n_\alpha^U n_\beta^V (T_g^{(\alpha)} \otimes T_g^{(\beta)}) = \bigoplus_{\alpha, \beta, \gamma\in \hat{\cg}} n_\alpha^U n_\beta^V c_\gamma^{\alpha \beta} T_g^{(\gamma)},
\end{align*}
where the $c_\gamma^{\alpha \beta}$ are the Clebsch-Gordan coefficients.
Note that 
\begin{align*}
    c_1^{\alpha \beta}=\bracket{\chi^{(1)}}{\chi^{(\alpha)} \chi^{(\beta)}} = \int_{\cg} dg \chi^{(\alpha)}(g) \chi^{(\beta)}(g) = \bracket{\overline{\chi^{(\alpha)}}}{\chi^{(\beta)}} = \delta_{\bar{\alpha}\beta} .
\end{align*}
Thus, the dimension of the physical Hilbert space $\ch_\phys$ can be written as
\begin{align*}
    d_\phys=\dim\ch_\phys=\sum_{\alpha, \beta\in \hat{\cg}} n_\alpha^U n_\beta^V c_1^{\alpha \beta} = \sum_{\alpha, \beta\in \hat{\cg}} n_\alpha^U n_\beta^V \delta_{\bar{\alpha}\beta} = \sum_{\alpha, \beta\in \hat{\cg}} n_\alpha^U n_{\bar{\alpha}}^V .
\end{align*}
\vskip -2em
\end{proof}
The following result is typically only given for finite groups~\cite{Simon}; hence, for completeness, we here prove it for compact Lie groups. However, the result is certainly already well-known.
\begin{lemma}
The convolution of two irreducible characters yields
\begin{equation*}
    \chi^{\alpha}\ast\chi^{\beta}=\frac{\delta_{\alpha \beta}\chi^{\alpha}}{d_\alpha}.
\end{equation*} 
\end{lemma}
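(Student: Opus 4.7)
The plan is to reduce the identity to a direct application of Schur orthogonality (the Peter-Weyl theorem in the compact setting). Recall that on a compact group the convolution is $(f\ast f')(h)=\int_{\cg}dk\, f(k)\,f'(k^{-1}h)$, and that an irreducible character is a trace, $\chi^{\alpha}(g)=\tr T^{(\alpha)}_g=\sum_i T^{(\alpha)}_{g,ii}$, where $T^{(\alpha)}$ is a unitary irrep of dimension $d_\alpha$. First, I would expand the convolution using the homomorphism property $T^{(\beta)}_{k^{-1}h}=T^{(\beta)}_{k^{-1}}T^{(\beta)}_h$ and write out both traces in a fixed basis, producing an integrand which is a finite sum of products of matrix coefficients times a fixed matrix element $T^{(\beta)}_{h,lj}$ that can be pulled out of the integral.

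Next, I would exploit unitarity to rewrite $T^{(\beta)}_{k^{-1},jl}=\overline{T^{(\beta)}_{k,lj}}$. The remaining integral is then of the canonical Schur-orthogonality form
\[
\int_{\cg} dk\, T^{(\alpha)}_{k,ii}\,\overline{T^{(\beta)}_{k,lj}}=\frac{\delta_{\alpha\beta}\,\delta_{il}\,\delta_{ij}}{d_\alpha},
\]
valid for compact groups with normalized Haar measure~\cite{Folland}. Contracting the Kronecker deltas forces $i=j=l$, so the surviving sum becomes $\frac{\delta_{\alpha\beta}}{d_\alpha}\sum_i T^{(\alpha)}_{h,ii}=\frac{\delta_{\alpha\beta}}{d_\alpha}\chi^{\alpha}(h)$, which is exactly the claimed formula.

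The only step that is not pure bookkeeping is the Schur orthogonality input; for finite groups this is an elementary averaging argument, while for compact Lie groups it is the standard Peter-Weyl statement. Either way it is an off-the-shelf ingredient, so I do not expect any real obstacle — the substance of the lemma is entirely in converting the statement about convolutions of characters into a statement about orthonormality of matrix coefficients, after which the Kronecker deltas do all the work.
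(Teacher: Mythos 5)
Your proposal is correct and follows essentially the same route as the paper: expand both characters as traces of matrix coefficients, use unitarity to turn $T^{(\beta)}_{k^{-1}}$ into a complex conjugate, and then apply Schur orthogonality of matrix elements so that the Kronecker deltas contract to reproduce $\chi^{\alpha}/d_\alpha$. The only difference is an inessential choice of convolution convention (which factor gets the inverse argument), so there is nothing substantive to add.
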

\begin{proof}
We use that the matrix elements of complex irreps $T^{(\alpha)}_{ij}$ of a compact Lie group $\mathcal{G}$ are orthogonal, i.e.
\begin{equation*}
    \bracket{T^{(\alpha)}_{ij}}{T^{(\beta)}_{kl}}=\int_{\mathcal{G}} dg \overline{(T^{(\alpha)}_g)_{ij}}(T_g^{(\beta)})_{kl}=\frac{\delta_{\alpha \beta}\delta_{ik}\delta_{jl}}{d_{\alpha}}.
\end{equation*}
So, let us check
\begin{eqnarray*}
\left(\chi^{(\alpha)}\ast\chi^{(\beta)}\right)(g)&=&\int_{\mathcal{G}}dh \chi^{(\alpha)}(gh^{-1})\ast\chi^{(\beta)}(h)=\sum_{ijk}\int_{\mathcal{G}}dh (T_g^{(\alpha)})_{ij}(T_{h^{-1}}^{(\alpha)})_{ji}(T_h^{(\beta)})_{kk}\nonumber\\
&=& \sum_{ijk}(T_g^{(\alpha)})_{ij}\int_{\mathcal{G}}dh\overline{(T_h^{(\alpha)})_{ij}}(T_h^{(\beta)})_{kk}=\sum_{ijk}(T_g^{(\alpha)})_{ij}\frac{\delta_{\alpha \beta}\delta_{ik}\delta_{jk}}{d_\alpha}=\frac{\delta_{\alpha \beta}\chi^{\alpha}}{d_\alpha}.
\end{eqnarray*}
\vskip -2em
\end{proof}

\section{Proof of Theorem \ref{TheResource}} \label{app:proof2}
\begin{proof}
Since we are interested in pure-state convertibility under covariant operations, we use the techniques of~\cite{Marvian,MarvianSpekkens} via \emph{characteristic functions} $\chi_{\varphi}(g):=\langle\varphi|V_g|\varphi\rangle$. Let us write $|\psi\rangle\stackrel{\mathcal{G}-{\rm cov}}\longrightarrow |\psi'\rangle$ if there is a $\mathcal{G}$-covariant operation $\mathcal{E}$ with $\mathcal{E}(|\psi\rangle\langle\psi|)=|\psi'\rangle\langle\psi'|$. Suppose that $|\psi\rangle_{S|R}\stackrel{\mathcal{G}-{\rm cov}}\longrightarrow |\psi'\rangle_{S|R}$, where $|\psi\rangle_{S|R}:=|\psi\rangle_{S|R}^{|e\rangle}$ and $|\psi'\rangle_{S|R}:=|\psi'\rangle_{S|R}^{|e\rangle'}$. Then there is a $\mathcal{G}$-covariant operation $\mathcal{E}$ on $S$ with $\mathcal{E}(\rho)=\rho'$, where $\rho=|\psi\rangle\langle\psi|_{S|R}$ and $\rho':=|\psi'\rangle\langle\psi'|_{S|R}$. Hence
\begin{eqnarray*}
\int_{\mathcal{G}}dg\,\mathcal{F}(\rho,V_g\rho V_g^\dagger)=\mathcal{U}(\psi_{S|R})=\mathcal{U}(\psi'_{S|R})
= \int_{\mathcal{G}}dg\, \mathcal{F}(\rho',V_g \rho' V_g^\dagger)=\int_{\mathcal{G}}dg\, \mathcal{F}(\mathcal{E}(\rho),\mathcal{E}(V_g\rho V_g^\dagger)).
\end{eqnarray*}
Since the fidelity satisfies the data-processing inequality, $\mathcal{F}(\mathcal{E}(\rho),\mathcal{E}(\sigma))\geq\mathcal{F}(\rho,\sigma)$, and $g\mapsto \mathcal{F}(\rho,V_g\rho V_g^\dagger)$ is continuous, we must have $\mathcal{F}(\rho,V_g\rho V_g^\dagger)=\mathcal{F}(\rho',V_g\rho'V_g^\dagger)$ for all $g\in\mathcal{G}$. But $\mathcal{F}(\rho,V_g\rho V_g^\dagger)=|\langle\psi|U_g|\psi\rangle_{S|R}|^2=|\chi_{\psi}(g)|^2$, and so $|\chi_{\psi}(g)|^2=|\chi_{\psi'}(g)|^2$ for all $g\in\mathcal{G}$. Furthermore, it follows from $|\psi\rangle_{S|R}\stackrel{\mathcal{G}-{\rm cov}}\longrightarrow |\psi'\rangle_{S|R}$ that there is a positive definite function~\cite[Theorem 63]{Marvian} $f:\mathcal{G}\to\mathbb{C}$ (see Supplemental Material~\ref{SecSomeLemmas} for a definition) such that $\chi_{\psi}(g)=\chi_{\psi'}(g)f(g)$ for all $g\in\mathcal{G}$. Now, for those $g$ with $\chi_{\psi}(g)\neq 0$, it follows that $|f(g)|=1$, i.e.\ $f(g)^{-1}=\overline{f(g)}$. Since $\chi_{\psi}(g)=0\Leftrightarrow \chi_{\psi'}(g)=0$,
\[
   \chi_{\psi'}(g)=\chi_{\psi}(g) \overline{f(g)}\mbox{ for all }g\in\mathcal{G}.
\]
If $f$ is positive definite, so is $\bar f$. Hence, it follows again from~\cite[Theorem 63]{Marvian} that $|\psi'\rangle_{S|R}\stackrel{\mathcal{G}-{\rm cov}}\longrightarrow |\psi\rangle_{S|R}$, which implies that $|\psi\rangle_{S|R}^{|e\rangle}$ is as asymmetric as $|\psi\rangle_{S|R}^{|e\rangle'}$.
\end{proof}

\section{Proof of Theorem \ref{TheAvg}} \label{app:proof3}
\begin{proof}
Using Lemma~\ref{trace of squared operator}, we have
\begin{align}
    \cu_{\rm phys} &= \int_{\ch_\phys} d\psi\ \tr[ \left( \Tr_R \ketbra{\psi}{\psi}_{RS} \right)^2] 
    = \tr\left(\mathbf{1}_{RR'} \otimes \mathbb{F}_{SS'} \int_{\ch_\phys} d\psi (\ketbra{\psi}{\psi}_{RS} \otimes \ketbra{\psi}{\psi}_{R'S'})\right) \nonumber\\ &= \frac{2}{d_{\rm phys} (d_{\rm phys} +1)} \tr \left((\mathbf{1}_{RR'} \otimes \mathbb{F}_{SS'}) \Pi_{\phys,\sym}^{RS,R'S'} \right)= \frac{2}{d_{\rm phys} (d_{\rm phys} +1)} \tr \left((\mathbf{1}_{RR'} \otimes \mathbb{F}_{SS'}) \Pi_{\sym}^{RS,R'S'} (\Pi_\phys^{RS}\otimes \Pi_\phys^{R'S'}) \right).
    \label{eq: integral in terms of projectors}
\end{align}
Let us first consider the expression
\begin{align*}
    (\mathbf{1}_{RR'} \otimes \mathbb{F}_{SS'}) \Pi_{\sym}^{RS,R'S'} = \frac{1}{2} (\mathbf{1}_{RR'} \otimes \mathbb{F}_{SS'}) (\mathbf{1} + \mathbb{F}_{RS,R'S'}) .
\end{align*}
Letting the second term act on a general basis state, we see
\begin{align*}
    (\mathbf{1}_{RR'} \otimes \mathbb{F}_{SS'})\mathbb{F}_{RS,R'S'} \ket{i}_R \ket{j}_{R'}\ket{s}_S \ket{k}_{S'} &= \mathbf{1}_{RR'} \otimes \mathbb{F}_{SS'} \ket{j}_R \ket{i}_{R'}\ket{k}_S \ket{s}_{S'} = \ket{j}_R \ket{i}_{R'}\ket{s}_S \ket{k}_{S'} \\
    &= \mathbb{F}_{RR'} \otimes \mathbf{1}_{SS'} \ket{i}_R \ket{j}_{R'}\ket{s}_S \ket{k}_{S'}.
\end{align*}
Thus,
\begin{align*}
    (\mathbf{1}_{RR'} \otimes \mathbb{F}_{SS'}) \Pi_{\sym}^{RS,R'S'} = \frac{1}{2} (\mathbf{1}_{RR'} \otimes \mathbb{F}_{SS'}+ \mathbb{F}_{RR'} \otimes \mathbf{1}_{SS'}).
\end{align*}
Hence, Eq.~\eqref{eq: integral in terms of projectors} can be written as
\begin{align*}
   \cu_{\rm phys} = \frac{1}{d_{\rm phys} (d_{\rm phys} +1)}\left[ \tr \left((\mathbf{1}_{RR'} \otimes \mathbb{F}_{SS'}) (\Pi_\phys^{RS}\otimes \Pi_\phys^{R'S'}) \right)+ \tr \left((\mathbb{F}_{RR'} \otimes \mathbf{1}_{SS'}) (\Pi_\phys^{RS}\otimes \Pi_\phys^{R'S'}) \right)\right].
\end{align*}
Let us rewrite the first term in the following way:
\begin{align*}
    \tr ((\mathbf{1}_{RR'} \otimes \mathbb{F}_{SS'}) (\Pi_\phys^{RS}\otimes \Pi_\phys^{R'S'}) ) &= \int_{\cg} dg \int_\cg dg' \tr ((\mathbf{1}_{RR'} \otimes \mathbb{F}_{SS'}) U_g^R \otimes V_g^S \otimes U_{g'}^{R'} \otimes V^{S'}_{g'} \nonumber\\ 
    &=\int_{\cg} dg \int_\cg dg' \chi^U(g) \chi^U(g') \tr(\mathbb{F}_{SS'}V_g^S \otimes V^{S'}_{g'}) \nonumber\\
    &=\int_{\cg} dg \int_\cg dg' \chi^U(g) \chi^U(g') \tr(V_g^S \cdot V^{S'}_{g'}) \nonumber\\
    &= \int_{\cg} dg \int_\cg dg' \chi^U(g) \chi^U(g')\chi^V(gg')
\end{align*}
where $\chi^U(g)=\tr(U_g) =  \sum_{\alpha \in \hat{\cg}} n_{\alpha}^U \tr(T_g^{(\alpha)}) =\sum_{\alpha \in \hat{\cg}} n_{\alpha}^U \chi^{(\alpha)}(g)$. At the first equality sign, we used $\Pi_\phys^{RS}=\int_{\cg} dg U_g^R\otimes V_g^S$ and similarly for $\Pi_\phys^{R'S'}$. We used Lemma \ref{tensor product flip trick} to go from the second to the third line. 

We can proceed similarly with the second term:
\begin{align*}
    \tr ((\mathbb{F}_{RR'} \otimes \mathbf{1}_{SS'}) (\Pi_\phys^{RS}\otimes \Pi_\phys^{R'S'}) ) = \int_{\cg} dg \int_\cg dg' \chi^V(g) \chi^V(g')\chi^U(gg').
\end{align*}

Note that in general, we can write
\begin{align*}
    \int_{\cg} dg \int_\cg dg' \chi^A(g) \chi^A(g')\chi^B(gg') = \int_{\cg} dg \int_\cg dh \chi^A(g) \chi^A(g^{-1}h)\chi^B(h) .
\end{align*}
Finally, we get
\begin{align}
    \cu_{\rm phys} &= \frac{1}{d_{\rm phys} (d_{\rm phys} +1)} \left( \int_{\cg} dg \int_\cg dh \chi^U(g) \chi^V(h)\chi^U(g^{-1}h) +  \int_{\cg} dg \int_\cg dh \chi^V(g) \chi^U(h)\chi^V(g^{-1}h)  \right) \nonumber\\
    &= \frac{1}{d_{\rm phys} (d_{\rm phys} +1)}\int_{\cg} dg \int_\cg dh \chi^U(g) \chi^V(h) \left( \chi^U(g^{-1}h) + \chi^V(h^{-1}g) \right). \label{eq: integrals as sum of characters}
\end{align}
Note that the first term can be written as
\begin{align*}
    \int_{\cg} dg \int_\cg dh \chi^U(g) \chi^V(h) \chi^U(g^{-1}h) &= \sum_{\alpha, \beta, \gamma \in \hat{\cg}} n_\alpha^U n_\beta^V n_\gamma^U \int_{\cg} dg \int_\cg dh \chi^{(\alpha)}(g) \chi^{(\beta)}(h) \chi^{(\gamma)}(g^{-1}h) \nonumber\\
    &= \sum_{\alpha, \beta, \gamma \in \hat{\cg}} n_\alpha^U n_\beta^V n_\gamma^U \int_\cg dh \chi^{(\beta)}(h) (\chi^{(\gamma)} * \chi^{(\alpha)})(h) \nonumber\\ 
    &= \sum_{\alpha, \beta \in \hat{\cg}} \frac{1}{d_\alpha} (n_\alpha^U)^2 n_\beta^V \bracket{\overline{\chi^{(\beta)}}}{\chi^{(\alpha)}} ,
\end{align*}
while the second term simplifies to 
\begin{align*}
    \int_{\cg} dg \int_\cg dh \chi^U(g) \chi^V(h) \chi^V(h^{-1}g) &= \sum_{\alpha,\beta \in \hat{\cg}} \frac{1}{d_\alpha} (n_\alpha^V)^2 n_\beta^U \bracket{\overline{\chi^{(\beta)}}}{\chi^{(\alpha)}} = \sum_{\alpha,\beta \in \hat{\cg}} \frac{1}{d_\beta} (n_\beta^V)^2 n_\alpha^U \bracket{\overline{\chi^{(\alpha)}}}{\chi^{(\beta)}} .
\end{align*}
Then, Eq.~\eqref{eq: integrals as sum of characters} gives
\begin{align}
     \cu_{\rm phys} &= \frac{1}{d_{\rm phys} (d_{\rm phys} +1)} \sum_{\alpha,\beta \in \hat{\cg}} \left( \frac{1}{d_\alpha} (n_\alpha^U)^2 n_\beta^V \bracket{\overline{\chi^{(\beta)}}}{\chi^{(\alpha)}} + \frac{1}{d_\beta} (n_\beta^V)^2 n_\alpha^U \bracket{\overline{\chi^{(\alpha)}}}{\chi^{(\beta)}} \right) \nonumber\\ 
     &= \frac{1}{d_{\rm phys} (d_{\rm phys} +1)} \sum_{\alpha,\beta \in \hat{\cg}} \frac{1}{d_\alpha} \left(  (n_\alpha^U)^2 n_\beta^V \bracket{\overline{\chi^{(\beta)}}}{\chi^{(\alpha)}} + (n_\alpha^V)^2 n_\beta^U \bracket{\overline{\chi^{(\beta)}}}{\chi^{(\alpha)}} \right) \nonumber\\
     &= \frac{1}{d_{\rm phys} (d_{\rm phys} +1)} \sum_{\alpha,\beta \in \hat{\cg}} \frac{1}{d_\alpha} \bracket{\overline{\chi^{(\beta)}}}{\chi^{(\alpha)}} \left( (n_\alpha^U)^2 n_\beta^V  + (n_\alpha^V)^2 n_\beta^U \right). \label{eq: integral as sum of char2} 
\end{align}
Since we are only considering the case $d_{\rm phys} \geq 1$, there exists at least one $\gamma$ such that $n_{\gamma}^U \neq 0$ and $n_{\bar{\gamma}}^V \neq 0$. Then, using Lemma \ref{dimmult}, we have
\begin{align*}
     \cu_{\rm phys} &=\frac{1}{d_{\rm phys} (d_{\rm phys} +1)} \sum_{\alpha \in \hat{\cg}} \frac{1}{d_\alpha} \left( (n_\alpha^U)^2 n_{\bar{\alpha}}^V  + (n_\alpha^V)^2 n_{\bar{\alpha}}^U \right) =\frac{1}{d_{\rm phys} (d_{\rm phys} +1)} \sum_{\alpha \in \hat{\cg}} \frac{n_\alpha^U n_{\bar{\alpha}}^V (n_{\alpha}^U + n_{\bar{\alpha}}^V)}{d_\alpha}   .
\end{align*}
\vskip -2em
\end{proof}

\section{(In)dependence of asymmetry on the coherent state system}\label{app:altmeasures}
We have seen that the conditional uniformity $\mathcal{U}(\psi_{S|R})=\int_{\mathcal{G}} dg\, |\langle\psi|_{S|R} V_g |\psi\rangle_{S|R} |^2$, and thus conditional asymmetry $\mathcal{A}=-\log\mathcal{U}$, does not depend on the choice of coherent state system $\{ |g\rangle_R \}_{g\in \mathcal{G}}$, even though $|\psi\rangle_{S|R}=|\psi\rangle_{S|R}^{|e\rangle}$ does. However, $\mathcal{A}$ provides just one possible \emph{measure of asymmetry}, or \emph{asymmetry monotone}, i.e.\ map $\mathcal{M}$ from quantum states to real numbers such that
\[
    \rho\mbox{ is at least as asymmetric as }\rho' \Rightarrow \mathcal{M}(\rho)\geq \mathcal{M}(\rho')
\]
(see Lemma~\ref{LemMon} below.) Here, we give an example of different asymmetry measures that \emph{do} depend on the choice of $|e\rangle$. But we will see that the resulting \emph{order}, i.e.\ which of the two given coherent state systems induces more asymmetry, is reversed by changing the asymmetry measure. This confirms the result of Theorem~\ref{TheResource}: no choice of coherent state system induces more asymmetry \emph{in the resource-theoretic sense} than any other.

To this end, consider the smallest finite non-Abelian group $\mathcal{G}=S_3$ acting on $R\otimes S$ with $R=S = \mathbb{C}^3$. We consider the case where $S_3$ is represented as permutations of the three basis vectors $\{ |i\rangle \}_{i=0}^2$, i.e. both $R$ and $S$ carry the fundamental representation. We write $U_g$ for the representation of the element $g\in S_3$ on $R$ and similarly $V_g$ on $S$. One can easily show that $U_g$ (and similarly $V_g=U_g$) takes the form 
\[
   U_g = T_g^{(1)} \oplus T_g^{(\mathrm{std})},
\]
where $T_g^{(1)}$ denotes the trivial and $T_g^{(\mathrm{std})}$ the two-dimensional standard representation of $S_3$.
The coherent state system $\{ |g\rangle_R \}_{g\in S_3}$ is generated by the seed state 
\[
   |e\rangle = \alpha |0\rangle + \beta |1\rangle + \gamma |2\rangle,
\]
where we have $U_g|e\rangle = |g\rangle,\ g\in S_3$.
There are two conditions on the coefficients $\alpha, \beta, \gamma \in \mathbb{C}$. First, for the states to be normalized, we require $|\alpha|^2 + |\beta|^2 +|\gamma|^2 = 1$. Moreover, the coherent state system generated by $|e\rangle$ needs to give rise to a resolution of the identity, i.e. $\frac{1}{|G|}\sum_{g \in S_3} |g\rangle\langle g | =c\cdot\mathbf{1}_R$ with $c\in\mathbb{R}$. With $|S_3|=6$, we find that $c=\frac{1}{3}$ and $| \langle + | e \rangle |^2 = \frac{1}{3}$ with $|+\rangle = \frac{1}{\sqrt{3}} ( |0\rangle +  |1\rangle +  |2\rangle )$. The latter condition is equivalent to 
\[
    \alpha \beta^* + \alpha^*\beta + \alpha \gamma^* + \alpha^*\gamma + \beta \gamma^* + \beta^* \gamma = 0.
\]
Let us choose two different seed states 
\begin{align*}
    |e\rangle_1 &= \frac{1}{3} |0\rangle - \frac{2}{3} |1\rangle - \frac{2}{3} |2\rangle, \\
    |e\rangle_2 &= \frac{1}{\sqrt{2}} |0\rangle  - \frac{i}{\sqrt{2}} |2\rangle,
\end{align*}
that generate two different systems of coherent states. One can easily check that the above conditions are satisfied. As expected, we find that the physical uniformity as defined in the main body takes on the same value for both systems. More precisely, by direct calculation, we find
\[
    \mathcal{U}_{2,{\rm phys}}^{|e\rangle_j}:= \int_{\mathcal{H}_{\rm phys}} d\psi\,  \frac{1}{6} \sum_{g\in S_3} \left|\langle\psi|_{S|R}^{|e\rangle_j} V_g |\psi\rangle_{S|R}^{|e\rangle_j} \right|^2 = \frac{1}{2} = \mathcal{U}_{\rm phys}
\]
for $j=1,2$, which confirms the result we get from Theorem \ref{TheAvg}: with $n^U_1 = n_1^V = 1$ and $n^U_{\mathrm{std}} = n_{\mathrm{std}}^V = 1$, we find $\mathcal{U}_{\phys}=\frac{1}{2}$. The subscript in the above equation indicates that we take the second power in the definition of the conditional uniformity. 

More generally, one can define
\[
    \mathcal{U}_{p,{\rm phys}}^{|e\rangle_j}:= \int_{\mathcal{H}_{\rm phys}} d\psi\,\mathcal{U}_{p}(\psi_{S|R}^{|e\rangle_j}),
\]
where $p\geq 0$ and
\[
    \mathcal{U}_p(\varphi_S)=\frac{1}{6} \sum_{g\in S_3} |\langle\varphi| V_g |\varphi\rangle_S |^p.
\]
Fig.~\ref{fig: asymmetry} illustrates $\mathcal{U}_{p,{\rm phys}}^{|e\rangle_j}$ as a function of $p$ for the two seed states $|e\rangle_1$, $|e\rangle_2$ above ($j=1,2$).
\begin{figure}[ht]
    \centering
    \includegraphics[width=0.7\linewidth]{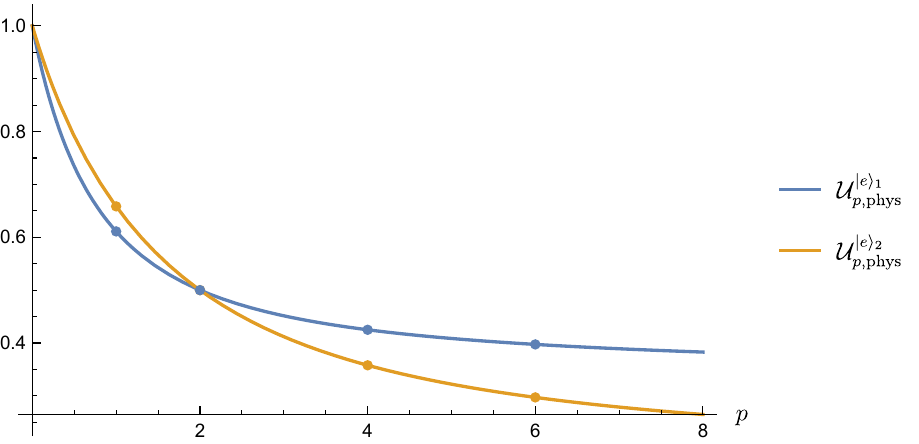}
    \caption{$\mathcal{U}_{p,{\rm phys}}^{|e\rangle_j}$ as a function of $p$ for $j=1,2$.}
    \label{fig: asymmetry}
\end{figure}

In particular, numerical and symbolic integration gives us the values
\[
   \mathcal{U}_{1,{\rm phys}}^{|e\rangle_1}\approx 0.611,\quad \mathcal{U}_{1,{\rm phys}}^{|e\rangle_2}\approx 0.658,\quad \mathcal{U}_{4,{\rm phys}}^{|e\rangle_1}=\frac{17}{40} \approx 0.425,\quad  \mathcal{U}_{4,{\rm phys}}^{|e\rangle_2} = \frac{229}{640} \approx 0.358.
\]
As shown in Lemma~\ref{LemMon} below, $\mathcal{A}_p:=-\log \mathcal{U}_p$ defines an asymmetry monotone. The above shows that the value of these monotones \emph{does} in general depend on the choice of coherent state system. For example, there exist states $|\psi\rangle_{RS}$ for which $\mathcal{A}_4(\psi_{S|R}^{|e\rangle_1})<\mathcal{A}_4(\psi_{S|R}^{|e\rangle_2})$.

Furthermore, neither one of the two coherent state systems can be regarded as ``better'' on average than the other one: we have $\mathcal{A}_{1,{\rm phys}}^{|e\rangle_1}:=-\log \mathcal{U}_{1,{\rm phys}}^{|e\rangle_1}>\mathcal{A}_{1,{\rm phys}}^{|e\rangle_2}$, but $\mathcal{A}_{4,{\rm phys}}^{|e\rangle_1}<\mathcal{A}_{4,{\rm phys}}^{|e\rangle_2}$. That is, the question of whether $|e\rangle_1$ or $|e\rangle_2$ induces more asymmetry on $S$ depends on the choice of monotone that is used to quantify the asymmetry.

This is not only true on average, but also on the level of individual states, which can be seen as follows. As apparent from the plot, we have $\displaystyle \left.\frac d {dp} \cu_{p,{\rm phys}}^{|e\rangle_1}\right|_{p=2}>\left.\frac d {dp} \cu_{p,{\rm phys}}^{|e\rangle_2}\right|_{p=2}$. Since $\displaystyle \frac d {dp}\cu_{p,{\rm phys}}^{|e\rangle_j}=\int_{\ch_{\rm phys}} d\psi\, \frac d {dp} \cu_p(\psi_{S|R}^{|e\rangle_j})$, there exist (many) physical states $|\psi\rangle_{RS}$ with $\displaystyle \left.\frac d {dp}\cu_p(\psi_{S|R}^{|e\rangle_1})\right|_{p=2}>\left.\frac d {dp}\cu_p(\psi_{S|R}^{|e\rangle_2})\right|_{p=2}$. But since $\cu_2(\psi_{S|R}^{|e\rangle_1})=\cu_2(\psi_{S|R}^{|e\rangle_2})$, the graphs of $\mathcal{U}_p(\psi_{S|R}^{|e\rangle_i})$ must ``cross'' at $p=2$, and hence there is some $\delta>0$ with
\[
   \mathcal{A}_{2-\delta}(\psi_{S|R}^{|e\rangle_2})<\mathcal{A}_{2-\delta}(\psi_{S|R}^{|e\rangle_1}),\mbox{ but }\mathcal{A}_{2+\delta}(\psi_{S|R}^{|e\rangle_2})>\mathcal{A}_{2+\delta}(\psi_{S|R}^{|e\rangle_1}).
\]
If we measure asymmetry via $\mathcal{A}_{2-\delta}$, then the seed coherent state $|e\rangle_1$ induces more asymmetry for such states $|\psi\rangle_{RS}$; but if we quantify asymmetry via $\mathcal{A}_{2+\delta}$, then we obtain the exact opposite. This illustrates the validity of Theorem~\ref{TheResource}: resource-theoretically, no choice of coherent state system induces more asymmetry than any other. In this specific example, this is true even \emph{after} averaging over the physical Hilbert space.
 \begin{lemma}
 \label{LemMon}
 For every compact Lie group $\mathcal{G}$, and for every $p\geq 0$, the quantity
 \[
    \mathcal{A}_p(\rho):=-\log\int_{\mathcal{G}} dg\, \mathcal{F}(\rho, V_g \rho V_g^\dagger)^{p/2}
 \]
 is an asymmetry monotone. This includes the case $\mathcal{A}=\mathcal{A}_2$ from the main text.
 \end{lemma}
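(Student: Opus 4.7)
The plan is to chain together three standard ingredients: the data-processing inequality for fidelity, the defining property of $\mathcal{G}$-covariant operations, and the monotonicity of the functions $x \mapsto x^{p/2}$ (for $p \geq 0$) and $x \mapsto -\log x$. Unpacking the definition of an asymmetry monotone, what must be shown is that whenever $\rho' = \mathcal{E}(\rho)$ for some $\mathcal{G}$-covariant CPTP map $\mathcal{E}$, we have $\mathcal{A}_p(\rho) \geq \mathcal{A}_p(\rho')$. Since $-\log$ is monotone decreasing, this is equivalent to the inequality
\[
   \int_{\mathcal{G}} dg\, \mathcal{F}(\rho, V_g \rho V_g^\dagger)^{p/2} \leq \int_{\mathcal{G}} dg\, \mathcal{F}(\rho', V_g \rho' V_g^\dagger)^{p/2}.
\]

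First I would establish a pointwise-in-$g$ inequality for the integrand. Using $\mathcal{G}$-covariance of $\mathcal{E}$,
\[
   V_g \rho' V_g^\dagger = V_g \mathcal{E}(\rho) V_g^\dagger = \mathcal{E}(V_g \rho V_g^\dagger),
\]
so that, by the data-processing inequality for fidelity,
\[
   \mathcal{F}(\rho', V_g \rho' V_g^\dagger) = \mathcal{F}(\mathcal{E}(\rho), \mathcal{E}(V_g \rho V_g^\dagger)) \geq \mathcal{F}(\rho, V_g \rho V_g^\dagger)
\]
for every $g \in \mathcal{G}$. Because fidelities take values in $[0,1]$ and $x \mapsto x^{p/2}$ is non-decreasing on this interval for $p \geq 0$, raising both sides to the power $p/2$ preserves the inequality. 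Integrating over $g$ against the Haar measure yields the desired inequality displayed above, and taking $-\log$ then gives $\mathcal{A}_p(\rho) \geq \mathcal{A}_p(\rho')$.

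There is essentially no obstacle here: the argument is a routine combination of well-known facts, and is in fact a direct generalization of the computation already used inside the proof of Theorem~\ref{TheResource} for the special case $p=2$. The only minor points worth noting are that $\mathcal{A}_p$ is well defined, since $\mathcal{F}(\rho, V_g \rho V_g^\dagger)$ is continuous in $g$ and equals $1$ at $g=e$, so the integral inside the logarithm is strictly positive; and that the argument applies verbatim to covariant operations between different systems, in which case $V_g$ denotes the input and output representations respectively in the two occurrences of the fidelity.
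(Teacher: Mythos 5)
Your proof is correct and follows essentially the same route as the paper's: covariance of $\mathcal{E}$ to rewrite $V_g\mathcal{E}(\rho)V_g^\dagger=\mathcal{E}(V_g\rho V_g^\dagger)$, the data-processing inequality for fidelity, monotonicity of $x\mapsto x^{p/2}$, and finally $-\log$. The additional remarks on well-definedness and on maps between different systems are fine but not needed for the argument.
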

 \begin{proof}
 Suppose that $\mathcal{E}$ is any $\mathcal{G}$-covariant map. Then, using that $x\mapsto x^{p/2}$ is non-decreasing and the data processing inequality $\mathcal{F}(\mathcal{E}(\rho),\mathcal{E}(\sigma))\geq\mathcal{F}(\rho,\sigma)$, we get
 \begin{eqnarray*}
 \mathcal{A}_p(\mathcal{E}(\rho))&=&-\log\int_{\mathcal{G}}dg\, \mathcal{F}(\mathcal{E}(\rho),V_g \mathcal{E}(\rho)V_g^\dagger)^{p/2}=-\log\int_{\mathcal{G}}dg\, \mathcal{F}(\mathcal{E}(\rho), \mathcal{E}(V_g\rho V_g^\dagger))^{p/2}\\
 &\leq& -\log\int_{\mathcal{G}}dg\, \mathcal{F}(\rho, V_g\rho V_g^\dagger)^{p/2}=\mathcal{A}_p(\rho).
\end{eqnarray*}
\vskip -2em
 \end{proof}

\end{document}